\renewcommand{\rem}[1]{\marginpar{\footnotesize\raggedright #1}}
\newcommand{\set}[1]{\left\{#1\right\}}
\newcommand{\pr}[1]{\left(#1\right)}
\newcommand{\fpr}[1]{\mathopen{}\left(#1\right)}
\newcommand{\np}{\textbf{NP}}
\newcommand{\define}{\leftarrow}
\DeclareRobustCommand{\dispfunc}[2]{%
  \ensuremath{%
  \ifthenelse{\equal{#2}{}}%
    {\mathit{#1}}%
    {\mathit{#1}\fpr{#2}}}}
\definecolor{yafaxiscolor}{rgb}{0.3, 0.3, 0.3}
\definecolor{yafcolor1}{rgb}{0.4, 0.165, 0.553}
\definecolor{yafcolor2}{rgb}{0.949, 0.482, 0.216}
\definecolor{yafcolor3}{rgb}{0.47, 0.549, 0.306}
\definecolor{yafcolor4}{rgb}{0.925, 0.165, 0.224}
\definecolor{yafcolor5}{rgb}{0.141, 0.345, 0.643}
\definecolor{yafcolor6}{rgb}{0.965, 0.933, 0.267}
\definecolor{yafcolor7}{rgb}{0.627, 0.118, 0.165}
\definecolor{yafcolor8}{rgb}{0.878, 0.475, 0.686}
\newlength{\yafaxispad}
\newlength{\yaftlpad}
\newlength{\yaflabelpad}
\newlength{\yafaxiswidth}
\newlength{\yafticklen}
\def\pgfplots@drawtickgridlines@INSTALLCLIP@onorientedsurf#1{}
\newcommand{\yafdrawxaxis}[2]{
	\pgfplotstransformcoordinatex{#1}\let\xmincoord=\pgfmathresult 
	\pgfplotstransformcoordinatex{#2}\let\xmaxcoord=\pgfmathresult 
	\pgfsetlinewidth{\yafaxiswidth} 
	\pgfsetcolor{yafaxiscolor}
	\pgfpathmoveto{\pgfpointadd{\pgfpointadd{\pgfplotspointrelaxisxy{0}{0}}{\pgfqpointxy{\xmincoord}{0}}}{\pgfqpoint{-0.5\yafaxiswidth}{\yafaxispad}}}
	\pgfpathlineto{\pgfpointadd{\pgfpointadd{\pgfplotspointrelaxisxy{0}{0}}{\pgfqpointxy{\xmaxcoord}{0}}}{\pgfqpoint{0.5\yafaxiswidth}{\yafaxispad}}}
	\pgfusepath{stroke}

}
\newcommand{\yafdrawyaxis}[2]{
	\pgfplotstransformcoordinatey{#1}\let\ymincoord=\pgfmathresult 
	\pgfplotstransformcoordinatey{#2}\let\ymaxcoord=\pgfmathresult 
	\pgfsetlinewidth{\yafaxiswidth} 
	\pgfsetcolor{yafaxiscolor}
	\pgfpathmoveto{\pgfpointadd{\pgfpointadd{\pgfplotspointrelaxisxy{0}{0}}{\pgfqpointxy{0}{\ymincoord}}}{\pgfqpoint{\yafaxispad}{-0.5\yafaxiswidth}}}
	\pgfpathlineto{\pgfpointadd{\pgfpointadd{\pgfplotspointrelaxisxy{0}{0}}{\pgfqpointxy{0}{\ymaxcoord}}}{\pgfqpoint{\yafaxispad}{0.5\yafaxiswidth}}}
	\pgfusepath{stroke}
}
\newcommand{\yafdrawaxis}[4]{\yafdrawxaxis{#1}{#2}\yafdrawyaxis{#3}{#4}}
\pgfplotsset{axis y line=left, axis x line=bottom,
	tick align=outside,
	compat = 1.3,
	tickwidth=\yafticklen,
	clip = false,
	every axis title shift = 0pt,
    x axis line style= {-, line width = 0pt, opacity = 0},
    y axis line style= {-, line width = 0pt, opacity = 0},
    x tick style= {line width = \yafaxiswidth, color=yafaxiscolor, yshift = \yafaxispad},
    y tick style= {line width = \yafaxiswidth, color=yafaxiscolor, xshift = \yafaxispad},
    x tick label style = {font=\scriptsize, yshift = \yaftlpad},
    y tick label style = {font=\scriptsize, xshift = \yaftlpad},
    every axis y label/.style = {at = {(ticklabel cs:0.5)}, rotate=90, anchor=center, font=\scriptsize, yshift = -\yaflabelpad},
    every axis x label/.style = {at = {(ticklabel cs:0.5)}, anchor=center, font=\scriptsize, yshift = \yaflabelpad},
    x tick label style = {font=\scriptsize, yshift = 1pt},
    grid = major,
    major grid style  = {dash pattern = on 1pt off 3 pt},
	every axis plot post/.append style= {line width=\yafaxiswidth} ,
	legend cell align = left,
	legend style = {inner sep = 1pt, cells = {font=\scriptsize}},
	legend image code/.code={%
		\draw[mark repeat=2,mark phase=2,#1] 
		plot coordinates { (0cm,0cm) (0.15cm,0cm) (0.3cm,0cm) };%
	} 
}
\newcommand{\bsum}{\lor}
\newcommand{\order}[1]{\dispfunc{order}{#1}}
\newcommand{\inner}[1]{\dispfunc{inner}{#1}}
\newcommand{\border}[1]{\dispfunc{border}{#1}}
\newcommand{\total}[1]{\dispfunc{total}{#1}}
\newcommand{\children}[1]{\dispfunc{ch}{#1}}
\newcommand{\bigO}[1]{\mathcal{O}\fpr{#1}}
\newcommand{\probname}[1]{\textsc{#1}}
\newcommand{\obmf}{\probname{obmf}\xspace}
\newcommand{\obmfstep}{\probname{obmfstep}\xspace}
\newcommand{\cobmfstep}{\probname{cobmfstep}\xspace}
\newcommand{\optset}{\probname{optset}\xspace}
\newcommand{\obmfsym}{\probname{obmf}\textsubscript{\textup{\textrm{sym}}}\xspace}
\newcommand{\cobmf}{\probname{cobmf}\xspace}
\newcommand{\cobmfsym}{\probname{cobmf}\textsubscript{\textup{\textrm{sym}}}\xspace}
\newcommand{\hamilton}{\probname{Hamilton path}\xspace}
\newcommand{\aobmf}{\algname{obmf}\xspace}
\newcommand{\acobmf}{\algname{cobmf}\xspace}
\newcommand{\aobmfsym}{\algname{obmf}\textsubscript{sym}\xspace}
\newcommand{\acobmfsym}{\algname{cobmf}\textsubscript{sym}\xspace}
\newcommand{\asso}{\algname{asso}\xspace}
\newcommand{\assosym}{\algname{asso}\textsubscript{sym}\xspace}
\renewcommand{\datasetname}[1]{\textit{#1}}
\newcommand{\lesmis}{\datasetname{Les Mis\'erables}\xspace}
\newcommand{\now}{\datasetname{Paleo}\xspace}
\newcommand{\news}{\datasetname{Newsgroups}\xspace}
\newcommand{\newssym}{\datasetname{Terms}\xspace}
\newcommand{\locations}{\datasetname{Locations}\xspace}
\newcommand{\mammals}{\datasetname{Mammals}\xspace}
\newcommand{\lesmisa}{\datasetname{Les Mis}\xspace}
\newcommand{\newsa}{\datasetname{News}\xspace}
\newcommand{\paperurl}{\url{https://cs.uef.fi/~pauli/bmf/ordered_bmf/}}
\begin{document}
\title{Boolean matrix factorization meets consecutive ones property\thanks{This is an extended version of the paper of the same name presented in 2019 SIAM International Conference on Data Mining.}}

\author{%
  Nikolaj Tatti\thanks{University of Helsinki, Helsinki, Finland,\newline \texttt{nikolaj.tatti@helsinki.fi}}
    \and
  Pauli Miettinen\thanks{University of Eastern Finland, Kuopio, Finland,\newline \texttt{pauli.miettinen@uef.fi}. Part of this work was done while the author was with MPI-INF, Saarbr\"ucken, Germany.}
}

\date{}

\maketitle              


\begin{abstract}
  \small\baselineskip=9pt
  Boolean matrix factorization is a natural and a popular technique for
summarizing binary matrices. 
In this paper, we study a problem of Boolean matrix factorization where we
additionally require that the factor matrices have consecutive ones property (OBMF).
A major application of this optimization problem comes from graph visualization:
standard techniques for visualizing graphs are circular or linear layout, where
nodes are ordered in circle or on a line. A common problem with visualizing
graphs is clutter due to too many edges. The standard approach to deal with
this is to bundle edges together and represent them as ribbon. We also show that we
can use OBMF for edge bundling combined with circular or linear layout
techniques. 

We demonstrate that not only this problem is \NP-hard but we cannot have a
polynomial-time algorithm that yields a multiplicative approximation guarantee
(unless $\Poly = \NP$).  On the positive side, we develop a greedy algorithm
where at each step we look for the best 1-rank factorization. Since even
obtaining 1-rank factorization is \NP-hard, we propose an iterative algorithm
where we fix one side and and find the other, reverse the roles, and repeat.
We show that this step can be done in linear time using pq-trees.
We also extend the problem to cyclic ones property and symmetric factorizations. Our experiments show that our algorithms find high-quality factorizations and scale well.

\end{abstract}


\section{Introduction}
\label{sec:introduction}

Matrix factorization is an immensely popular way of summarizing data as well as
discovering signal from the data. While being useful, the interpretation and
visualization of discovered factor matrices may be difficult. A popular variant
for factorizing binary matrices is a $k$-Boolean matrix factorization, which,
essentially, summarizes the binary data as a union of $k$ tiles, that is, submatrices
full of 1s. However, visualizing such factorization is difficult as the
discovered rows and columns can be any sets, and there is no insightful way 
of visualizing them all at once.

In this paper we consider $k$-Boolean matrix factorization such that the
resulting matrix has a certain property: we can order the columns and the rows
such that the matrix consists of union of $k$ \emph{contiguous} tiles.  We do
not know the order before-hand, and we discover the order as we also discover
the factorization.

Our motivation for discovering such factorization is primarily due to easy
exploration of the factorization: we can draw the factorization as $k$ tiles.
While in certain cases, such a constraint may be too restrictive, there are
many settings, where this constraint comes naturally. As a specific example,
consider visualizing graphs. A classic technique for visualizing a graph is
using linear or circular layout, where the nodes are drawn on a line or circle,
and they are connected with arcs. The most common problem with visualizing
graphs is clutter due to too many edges. To combat the clutter, edges are often
grouped, and drawn in ribbons (see Figure~\ref{fig:lesmis} for an example). The problem is to discover such ribbons and the node order, while minimizing
the error. We show that we can use matrix factorization on the adjacency matrix
of a graph to find the order and the groups.

We show that the factorization we seek can be expressed with consecutive
ones property (C1P).  Namely, we will look for factor matrices $\mX$ and $\mY$
whose columns can be shuffled such that each row has a form of $[0, \ldots, 0,
1, \ldots, 1, 0, \ldots, 0]$. We show that the problem is \np-hard, even if $k
= 1$, and it is inapproximable for $k > 1$. On the positive side, we propose a
greedy algorithm that searches the factors in iterative manner.
The search is done by first fixing a vector in $\mX$ and finding the optimal
counterpart in $\mY$, then fixing the vector in $\mY$ and finding the optimal
vector in $\mX$, and so on, until convergence. We show that we can find the optimal
counterpart in linear time using pq-trees.

We also consider 3 extensions of this factorization: the first variant, cyclic
decomposition, consists of allowing factors to ``wrap around the border.''  the
second variant is specifically designed for symmetric matrices, while the last
variant combines the two. Performing cyclic and symmetric decomposition proves
to be useful for cyclic layout of graphs.

The rest of the paper is organized as follows: We present preliminary notation
and define the matrix factorization and the cyclic version in Section~\ref{sec:notation-definitions}.
We present the search algorithm in Section~\ref{sec:algorithm}.
The symmetric extensions are given in Section~\ref{sec:extensions}.
Section~\ref{sec:related-work} is dedicated to related work, and Section~\ref{sec:experiments} is dedicated to experimental evaluation.
Finally, we conclude the paper with remarks given in Section~\ref{sec:conclusions}.
All proofs are given in Appendix~\ref{sec:appendix:proofs}.


\section{Preliminary notation and problem definitions}
\label{sec:notation-definitions}
We begin by presenting preliminary notation, and then present the two main problem definitions.
Extended problems are discussed in Section~\ref{sec:extensions}.

\subsection{Notation}
\label{sec:notation}
Given an \by{n}{k} binary matrix $\mA$ and a \by{k}{m} binary matrix $\mB$, the \emph{Boolean matrix product} $\mA\bprod\mB$ is defined element-wise as
\begin{equation}
  \label{eq:bprod}
  (\mA\bprod\mB)_{ij} = \bigvee_{\ell = 1}^k a_{i\ell}b_{\ell j}\; .
\end{equation}
The \emph{Boolean matrix sum} of $\mA\in\B^{n\times m}$ and $\mB\in\B^{n\times m}$ is defined elementwise as $(\mA\bsum\mB)_{ij} = a_{ij}\lor b_{ij}$.

To measure the distance between two binary matrices, we use the \emph{squared Frobenius norm} of their (normal) difference, $\norm{\mA-\mB}_F^2$. Notice that as $\mA$ and $\mB$ are both binary, this is the same as calculating the number of disagreements between $\mA$ and $\mB$: $\norm{\mA-\mB}_F^2 = \abs{\{(i,j) : a_{ij} \neq b_{ij}\}}$.

We say that a binary matrix $\mX$ has a \emph{consecutive ones property} (C1P)
if its columns can be permuted such that each row has a form of $[0, \ldots, 0,
1, \ldots, 1, 0, \ldots, 0]$, that is, 1s form a contiguous interval.
For the sake of presentation, we will also refer these matrices as \emph{unimodal}.

We say that a binary matrix $\mX$ is \emph{cyclic} if its columns can be
permuted such that each row has a form of $[0, \ldots, 0, 1, \ldots, 1, 0,
\ldots, 0]$ or $[1, \ldots, 1, 0, \ldots, 0, 1, \ldots, 1]$.

\subsection{Problem definitions}
\label{sec:problem-definitions}
Next we will give our two main optimization problems.

\begin{problem}[Ordered BMF, \obmf]
  \label{prob:obmf}
  Given a binary matrix $\mD$ and an integer $k\in\N$, find 
  two unimodal binary
  matrices $\mX$ and $\mY$ that minimize the number of disagreements
  \begin{equation}
    \label{eq:obmf}
    \norm*[big]{\mD - (\mX^T\bprod\mY)}_F^2\; .
  \end{equation}
\end{problem}

\begin{problem}[Cyclic Ordered BMF, \cobmf]
  \label{prob:cobmf}
  Given a binary matrix $\mD$ and an integer $k\in\N$, find 
  two cyclic binary
  matrices $\mX$ and $\mY$ that minimize the number of disagreements
  \begin{equation}
    \label{eq:cobmf}
    \norm*[big]{\mD - (\mX^T\bprod\mY)}_F^2\; .
  \end{equation}
\end{problem}

The matrix $\mZ = \mX^T\bprod\mY$ given in Eq.~\ref{eq:obmf} has another natural alternative
characterization: the columns and the rows of $\mZ$ can be permuted such that the resulting
matrix is a union of $k$ contiguous tiles of 1s. Similarly, the matrix $\mZ = \mX^T\bprod\mY$ given in Eq~\ref{eq:cobmf}
can be permuted such that the resulting matrix is a union of $k$ contiguous tiles, but we also allow the tiles to wrap around the border.



Unsurprisingly, the problems are computationally infeasible.
First, we demonstrate that \obmf is difficult even if $k = 1$.

\begin{theorem}
  \label{thm:obmf:np_hard}
  The \obmf problem is \NP-hard,  even if $k = 1$.
\end{theorem}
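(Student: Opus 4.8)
The plan is to reduce from a known $\NP$-hard problem about intervals or consecutive-ones structure. Since for $k=1$ we are trying to approximate $\mD$ by a single Boolean rank-1 matrix $\vx^T \bprod \vy = \vx \vy^T$ whose support is a combinatorial rectangle $R \times C$ (with $R$ the set of rows where $x_i = 1$ and $C$ the set of columns where $y_j = 1$), and we additionally demand that $\vx$ and $\vy$ be unimodal, the unimodality of $\vx$ says: in the chosen row order, $R$ is an interval; unimodality of $\vy$ says: in the chosen column order, $C$ is an interval. So the optimization is: choose a row permutation, a column permutation, a row-interval $R$ and a column-interval $C$ minimizing $|\mD \triangle (R\times C)|$. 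Equivalently, $\mathrm{cost} = |D| + |R||C| - 2\,|D \cap (R\times C)|$, so we want to choose orders and intervals maximizing $2\,w(R\times C) - |R||C|$, where $w(R\times C)$ counts the ones of $\mD$ inside the submatrix $R\times C$. This is a "densest contiguous submatrix under the best simultaneous row/column reordering" type objective.

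First I would identify the hard combinatorial core. The freedom to permute rows and columns arbitrarily is what makes this hard: without it, a rank-1 tile can be found in polynomial time (a folklore fact; the submatrix with best density can be found by trying all $O(n^2 m^2)$ interval pairs). A natural source is a graph problem where ``being an interval'' encodes a path/ordering constraint — e.g. reduce from a problem where one must decide whether a ground set can be linearly ordered so that some prescribed family of sets all become intervals (this is exactly what pq-trees decide, so that alone is in $\mathsf{P}$), but combined with a cardinality/coverage requirement it becomes hard. Concretely, I would look to reduce from a variant of \textsc{Hamilton path} (the excerpt defines a macro for it, a strong hint) or from \textsc{Maximum Edge Biclique} in a way that the permutation must realize a Hamilton path for the tile to be ``cheap''.

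The cleanest route I would try: build $\mD$ from the incidence structure of a graph $G=(V,E)$ for which we want to decide Hamilton path. Rows are indexed so that a fixed large ``all-ones'' block forces, in any near-optimal solution, the chosen column order to place certain ``gadget columns'' contiguously; simultaneously each edge of $G$ is encoded so that two vertex-columns can be adjacent in the discovered interval without incurring error only if that edge exists in $G$. Then a perfect (zero-or-threshold-error) factorization exists iff $G$ has a Hamilton path, because the single tile's column interval, read off the column permutation, must list all vertices in an order where consecutive ones are joined by edges. I would design the row gadgets (one row per edge, plus padding rows) so that the contribution of a non-edge adjacency is a strictly positive, known penalty, making the reduction a decision reduction: $\mathrm{cost} \le B$ iff Hamilton path exists. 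One must be careful that $\vx$ (the row side) is also forced to be unimodal — I would make the row side trivial, e.g. force $R = $ all rows by padding, so that the row-interval constraint is vacuous and all the action is on the column order; symmetrically one could instead load everything onto the row order. Handling both matrices' unimodality simultaneously without them interfering is a modeling nuisance but not a real obstacle once one side is made degenerate.

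The main obstacle I expect is the arithmetic of the objective: because BMF uses $\lor$ and the cost counts both false positives ($|R||C| - w$) and false negatives ($|D| - w$), the gadget must be balanced so that the ``intended'' tile is genuinely optimal, not just feasible — stray ones of $\mD$ outside the tile, and the temptation to pick a smaller/denser tile covering only part of $G$, both have to be ruled out by slack calculations. I would control this by (i) making the non-gadget part of $\mD$ empty so there are no false negatives outside the tile, (ii) choosing padding multiplicities so the unique way to avoid a large false-positive cost is to take the full $V$-interval in a Hamiltonian order, and (iii) verifying the reduction is polynomial in $|V|,|E|$. A clean alternative, if the direct Hamilton-path encoding gets messy, is to reduce instead from \textsc{Betweenness} or from \textsc{Consecutive Ones with a forbidden pattern}/\textsc{Minimum Consecutive Ones Violations}, but I would attempt the Hamilton-path construction first since the macro suggests it is the authors' route.
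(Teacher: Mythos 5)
Your plan goes down the wrong road, and the road it chooses is actually impassable for $k=1$. Your own cost analysis is the key observation: for a single tile the error is $\abs{D} + \abs{R}\abs{C} - 2\,w(R\times C)$, which depends \emph{only on the sets} $R$ and $C$ and not on any ordering of the rows or columns. Since any single binary vector can be permuted so that its 1s are contiguous, the unimodality constraint is vacuous at $k=1$, and the problem is \emph{identical} to unconstrained rank-1 binary matrix factorization (where Boolean and ordinary arithmetic also coincide). That is exactly how the paper proves the theorem: it simply invokes the known \NP-hardness of rank-1 binary matrix factorization (Gillis and Vavasis). Having derived the rectangle objective, the correct next step was to recognize this known hard problem (or reduce from a maximum-edge-biclique-type problem, which you mention only in passing), not to build ordering gadgets.

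The Hamilton-path construction you commit to cannot be salvaged at $k=1$: because the objective is permutation-invariant, there is no mechanism by which ``two vertex-columns adjacent in the discovered interval'' can incur or avoid any penalty --- adjacency within $C$ is invisible to the cost. Gadgets that penalize non-edge adjacencies require several tiles whose \emph{boundaries} interact with the permutation; this is precisely the paper's reduction from \hamilton for Theorem~\ref{thm:obmf:inapprox}, where $k = 3m-n+2$ tiles are used and the count of tile boundaries per column pair encodes edge adjacency. Your instinct that the \hamilton macro signals the authors' route is right for the inapproximability result, but not for the $k=1$ hardness claim. Also note that your parenthetical claim that the unconstrained rank-1 tile ``can be found in polynomial time'' is false --- that is the \NP-hard core here, and mistaking it for a folklore-easy problem is what sends the plan off course.
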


Our next result shows that not only \obmf is difficult, but it is also
impossible to approximate. To show this, it is enough to demonstrate
that testing for zero-error solution is expensive.

\begin{theorem}
\label{thm:obmf:inapprox}
Deciding whether \obmf has a zero-error solution is \NP-complete.
\end{theorem}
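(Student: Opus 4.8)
The plan is to establish membership in \NP\ and then \NP-hardness by a reduction from \hamilton.

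Membership is immediate: a certificate is a pair of factor matrices $\mX,\mY$. Whether a binary matrix is unimodal (has the consecutive ones property) can be tested in linear time with pq-trees, so we check that $\mX$ and $\mY$ are both unimodal, recompute $\mX^T\bprod\mY$, and compare it with $\mD$ entrywise; all of this is polynomial.

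For hardness I would work with the tiling reformulation already noted above: a zero-error solution exists for a given $k$ iff the rows and columns of $\mD$ can be permuted so that the result is a union of at most $k$ contiguous all-ones rectangles, where rectangle $\ell$ is $R_\ell\times C_\ell$ with $R_\ell=\{i:X_{\ell i}=1\}$ and $C_\ell=\{j:Y_{\ell j}=1\}$, and unimodality of $\mX$ (resp.\ $\mY$) is exactly the demand that all $R_\ell$ (resp.\ all $C_\ell$) be intervals in one common row (resp.\ column) order. The key primitive is a column with exactly two ones, in rows $a$ and $b$: if the budget forces this column to be covered by a single tile, that tile is an all-ones $R\times C$ with $a,b\in R$ and $R$ an interval, hence $R$ contains every row lying between $a$ and $b$; since all such rows are zero in this column, $a$ and $b$ must be \emph{consecutive} in the row order. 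Starting from a graph $G=(V,E)$ with $\abs{V}=n$, I would assemble $\mD$ from (i) a rigid ``skeleton'' of auxiliary rows and columns whose only zero-error tiling, up to the obvious symmetries, pins down $n$ ordered row-slots and $n$ ordered column-slots; and (ii) for each edge $\{u,v\}\in E$ a small ``connector'' gadget that can be tiled without error only if $u$ and $v$ occupy adjacent slots. The target $k$ is chosen so that a zero-error solution must install a connector between every one of the $n-1$ consecutive slot pairs.

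Both directions are then routine to spell out. If $G$ has a Hamiltonian path $v_1,\dots,v_n$, put $v_i$ into slot $i$, install the connectors of the $n-1$ path edges, write down the resulting $k$ tiles, and verify that the $R_\ell$ and the $C_\ell$ are simultaneously intervals, giving a zero-error solution. Conversely, from a zero-error solution one shows that the skeleton forces the canonical slotting, that the budget is consumed entirely by the skeleton tiles plus exactly $n-1$ connector tiles, and that each connector tile forces its two slot-neighbours to be adjacent in $G$; reading the vertices off the slots yields a Hamiltonian path. The main obstacle is making the budget genuinely tight: since tiles are constrained only to be contiguous all-ones rectangles, a priori they may be merged or reshaped across gadget boundaries, so the heart of the proof is an extremal counting argument showing that no zero-error solution uses fewer tiles than intended unless the installed connectors lie along a Hamiltonian path. (An analogous construction, reducing from Hamiltonian cycle and replacing linear by cyclic consecutive ones, would give the corresponding hardness for \cobmf.)
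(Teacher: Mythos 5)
Your overall strategy coincides with the paper's: both establish membership routinely and prove hardness by reducing \hamilton to the question of covering a permuted $\mD$ by $k$ contiguous all-ones rectangles, using columns whose vertex-row ones sit in exactly two rows to force those two vertices to be adjacent in the row order. However, what you have written is a plan rather than a proof, and the two pieces you leave unspecified are precisely the ones that carry all the difficulty. First, the ``rigid skeleton'' and the ``connector gadgets'' are never constructed and no target $k$ is exhibited; without a concrete matrix one cannot check that the forward direction really achieves zero error within the budget while keeping \emph{both} factor matrices simultaneously unimodal (this is not automatic: the row sets and column sets of all $k$ tiles must be intervals under a single common order on each side). Second, you explicitly defer what you yourself call the heart of the proof --- the extremal argument showing that no zero-error solution can stay within the budget unless $n-1$ connectors are installed along a Hamiltonian path --- and this is exactly where a naive gadget design breaks, since tiles may be merged or reshaped across gadget boundaries.

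For comparison, the paper resolves both issues with one explicit construction: $\mD$ has size $\by{(n+m+1)}{(3m+1)}$, with two edge columns $x_\ell, y_\ell$ per edge (each carrying ones in the two endpoint rows) interleaved with separator columns $z_\ell$ and $m+1$ separator rows $s_\ell$, and the budget is $k = 3m-n+2$. Tightness is then proved by per-column accounting: letting $f_i$ and $g_i$ be the numbers of tiles starting and ending at column $i$, one has $\sum_i (f_i+g_i) = 2k$, one shows $g_i + f_{i+1} \geq 2$ for every adjacent pair of columns except those witnessing an ``active'' edge (one whose endpoints are adjacent in the row order), and summing gives that at least $n-1$ edges are active, which forces a Hamiltonian path. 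Some version of this counting step is indispensable; until you supply the concrete gadgets, the value of $k$, and this argument, the reduction is not established.
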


The proofs of these and other statements are given in Appendix~\ref{sec:appendix:proofs}.

\section{Iterative greedy algorithm}
\label{sec:algorithm}

\subsection{Greedy algorithm}

As we saw in the previous section, not only the problem is \NP-hard, we cannot
construct any polynomial-time algorithm with a multiplicative guarantee.
Hence, we need to resort to heuristics.  The most natural heuristic is a greedy
heuristic, where given a $(k - 1)$-sized factorization we look for a $k$-sized
factorization by adding one row and one column to $\mX$ and $\mY$.  Note that
these rows need to be selected carefully such that $\mX$ and $\mY$ remain
unimodal, and we also need to maintain the permutation(s).

Unfortunately, Theorem~\ref{thm:obmf:np_hard} states that we cannot even find
the best solution for $k = 1$ in polynomial-time. Fortunately, we can solve
quickly a subproblem, where we have fixed one side.

\begin{problem}[Ordered BMF step, \obmfstep]
\label{prob:obmfstep}
Given a binary matrix $\mD$ of size $\by{n}{m}$ and two unimodal matrices, $\mX'$ of size $\by{k}{n}$
and $\mY'$ of size $\by{(k - 1)}{m}$,
find the decomposition $\mX^T \bprod \mY$
solving \obmf such that $\mX = \mX'$ and $\mY$ is obtained by adding one new row to $\mY'$.
\end{problem}

We can use \obmfstep as follows. Assume that we have already found $\by{(k - 1)}{m}$ matrices $\mX$ and $\mY$.
We first extend $\mX$ with a new row using a given seed, and find
the optimal new row for $\mY$ (strategy for such selection is given later
using \obmfstep. We fix the discovered row, and use \obmfstep to find the corresponding
row for $\mX$.
Since we solve each step optimally,
the error will never increase. We stop when the error stops decreasing.
Note that we will need to provide a seed for the initial row in $\mX$. 
Here, we test several possible seeds $S$, and select the best. We experiment
with several options in experiments, but the default is that $S$ is equal to all singleton columns.
The pseudo-code for the algorithm is given in Algorithm~\ref{alg:greedy}.

\begin{algorithm}[tb]
\caption{Greedy iterative algorithm for estimating \obmf.
The algorithm takes as input the dataset, the desired dimension $k$,
and the seed set $S$ used for selecting the first candidate for a column.  
}
\small
\label{alg:greedy}
$\mX \define$ matrix of size $\by{0}{n}$\;
$\mY \define$ matrix of size $\by{0}{m}$\;
\ForEach{$i = 1 \ldots, k$} {
	\ForEach {$s \in S$} {
		$c \define s$\;
		\While {error decreases} {
			$r \define $ best row for fixed columns $[\mY; c]$\; 
			$c \define $ best column for fixed rows $[\mX; r]$\; 
		}
		$\mX \define [\mX; r]$\;
		$\mY \define [\mY; c]$\;
	}
}

\end{algorithm}

The remainder of this section is about solving \obmfstep in linear time.
Almost the same approach will also work for the cyclic version, \cobmfstep; we
will point the minute difference.

\subsection{Expressing permutations with pq-trees}

The complicated aspect of \obmfstep is that we need to make sure that the new matrix is unimodal.
Luckily, we can use pq-trees,
a classic structure that allows us
to express every permutation for which a set of binary vertices remain
unimodal. In this section we will give a brief review of pq-trees and the
two main properties that are relevant to us.

Assume that we are given a universe $U$; in our case this will be either rows
or columns of the input matrix. A pq-tree is a tree with each leaf
corresponding to $u \in U$. There are two types of non-leaf nodes, these types
will dictate what permutations we can perform on the children. We can permute
children of p-node in any order whereas the order of the children of q-node is
fixed but we can flip the direction. The leaves of the permuted tree will 
then indicate an order.
We will denote such orders by $\order{T}$, where $T$ is the pq-tree.

Two seminal results are important to us.
The first result states that there is a pq-tree $T$ such that $\order{T}$
are exactly the orders under which a set of binary vertices remain
unimodal. 

\begin{theorem}[\citet{booth1976testing}]
Given a universe $U$ and $k$ sets $S_i \subseteq U$, there is
a pq-tree $T$ such that $\order{T}$ are exactly the permutations of $U$
under which each $S_i$ is contiguous.
\end{theorem}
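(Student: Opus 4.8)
The plan is to construct $T$ incrementally, processing the sets $S_1,\ldots,S_k$ one at a time while maintaining the invariant that after step $i$ the current tree $T_i$ admits exactly those orders of $U$ under which $S_1,\ldots,S_i$ are all simultaneously contiguous. We start with $T_0$ being a single p-node whose children are all the elements of $U$; since the children of a p-node may be permuted arbitrarily, $\order{T_0}$ is the set of \emph{all} permutations of $U$, which is the correct base case. The final tree is then $T = T_k$.

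The heart of the argument is a reduction operation that, given a pq-tree $T$ and a subset $S$, produces a pq-tree $T'$ with $\order{T'} = \{\pi \in \order{T} : S \text{ is contiguous under } \pi\}$, or reports that this set is empty (in which case no admissible order exists at all; in our application the $S_i$ come from the columns of an actual matrix, so this case does not arise, but the construction degrades gracefully if it does). To build $T'$ one first marks every node of $T$ as \emph{full} if all leaves below it lie in $S$, \emph{empty} if none do, and \emph{partial} otherwise. The full and partial nodes form the \emph{pertinent subtree}; let $r$ be its (lowest) root. One then rewrites $T$ from the leaves of the pertinent subtree upward using a fixed finite catalogue of local replacement templates, one family for p-nodes and one for q-nodes. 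Informally, each template reorders and regroups the children of a node so that all full material is pushed into one contiguous stretch and at most one partial child survives, with its full portion adjacent to the full group: a p-node with full children $F$, empty children $E$ and partial children $P$ is reducible only if $|P|\leq 2$ (and $|P|\leq 1$ unless it is the pertinent root), and is split into at most two p-nodes joined through a new q-node; a q-node, whose child order is already fixed up to reflection, is reducible only if its full children occupy a contiguous run flanked by at most one partial child on each side, and then the partial children are spliced open. Strictly below $r$ the templates must force $S$ entirely contiguous; at $r$ itself they only need to make the full part contiguous \emph{within} that node.

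The main obstacle --- and the technical core of the Booth--Lueker proof --- is to show that this template catalogue is both \emph{sound} and \emph{complete}. Soundness (each template exactly narrows the admissible-order set by the new contiguity constraint, forbidding no order that was legal with $S$ contiguous and permitting no new one) is verified template by template by directly describing the orders each side admits. Completeness (whenever $\order{T}$ contains at least one order making $S$ contiguous, some bottom-up sequence of template applications succeeds and never stalls on an unmatched node configuration) is the delicate part; it is proved by induction up the pertinent subtree, showing that the set of ``reduced shapes'' attainable for a node is closed under its parent's template. The key structural fact driving this is that a pq-tree is the unique canonical representative of its order set --- two pq-trees admit the same set of orders iff they are equal up to trivial relabeling --- so that controlling the order set suffices to control the tree shape. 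Finiteness of the catalogue, together with the observation that each application strictly simplifies the pertinent subtree, gives termination of a single reduction, and iterating the reduction over $S_1,\ldots,S_k$ produces the claimed $T$.
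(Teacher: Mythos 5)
This theorem is not proved in the paper at all: it is quoted verbatim as a seminal result of Booth and Lueker, and the appendix contains no argument for it. So the only meaningful comparison is with the original source, and your outline does follow that source's strategy faithfully: initialize with a universal p-node, maintain the invariant that $T_i$ represents exactly the orders making $S_1,\ldots,S_i$ contiguous, and realize each step by a bottom-up template reduction over the pertinent subtree. The base case, the induction scheme, and the rough shape of the p-node and q-node templates (at most one partial child off the pertinent root, at most two at the root) are all as in Booth--Lueker.

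That said, what you have written is an outline rather than a proof, and the gap is exactly where you locate it: the soundness and completeness of the template catalogue \emph{is} the theorem, and you assert it rather than establish it. Two concrete points. First, the invariant you state for nodes strictly below the pertinent root --- that the templates there ``force $S$ entirely contiguous'' --- is not the right one; since pertinent leaves also lie outside that subtree, the correct requirement is that the pertinent leaves of the subtree can be brought to a contiguous block at one \emph{end} of the subtree's frontier (this is why non-root q-nodes admit a partial child on only one side, not both). Without stating this end-alignment invariant precisely, the inductive step cannot be checked. Second, you lean on the claim that a pq-tree is the unique canonical representative of its order set to drive completeness. That uniqueness (up to equivalence under permuting p-children and reflecting q-children) is itself a nontrivial theorem, and in the original development completeness of the templates is proved by direct case analysis on the possible configurations of full, empty, and partial children --- not by appeal to canonicity, which would be close to circular here. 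To turn your sketch into a proof you would need to enumerate the templates explicitly and verify, for each, that the order set of the replacement node equals the order set of the original node intersected with the new constraint.
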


The second result states that we can efficiently update the pq-tree.

\begin{theorem}[\citet{booth1976testing}]
Assume that we have a pq-tree $T$ over a universe $U$ and a set $S \subseteq T$. 
Let $P$ be the set of all permutations of $U$ where $S$ is contiguous.
If $\order{T} \cap P \neq \emptyset$, then
there is an $\bigO{\abs{U}}$-time algorithm that constructs a tree $T'$
such that $\order{T'} = \order{T} \cap P$.
If $\order{T} \cap P = \emptyset$, then the same algorithm detects a failure.
\end{theorem}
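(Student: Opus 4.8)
The plan is to reprove the classical pq-tree reduction procedure of Booth and Lueker. First I would \emph{color} the tree: a leaf is \emph{full} if its element lies in $S$ and \emph{empty} otherwise, and an internal node is \emph{full} if all its leaf descendants are full, \emph{empty} if all are empty, and \emph{partial} otherwise. Let the \emph{pertinent subtree} be the minimal connected subtree containing every full leaf, and let $\rho$ be its root (the least common ancestor of the full leaves). The observation driving everything is that $\order{T}\cap P\neq\emptyset$ is a condition \emph{local} to the pertinent subtree: an order in $\order{T}$ makes $S$ contiguous exactly when the admissible p-node permutations and q-node reversals can be chosen so that every partial node below $\rho$ pushes all of its full descendants to the side facing $\rho$, and at $\rho$ itself the full children come out mutually consecutive.

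Next I would describe the algorithm: process the pertinent nodes strictly bottom-up and, at each one, apply one rule from a fixed constant-size catalogue of \emph{templates}, indexed by the node type (p or q) and by the full/empty/partial pattern of its children. A template does one of three things: collapse the children that will lie interior to the final block into a single full child; or, for a partial node (one on the path from $\rho$ toward the empty region), re-group its children into a clean ``empty part'' and ``full part'' --- for a p-node by introducing an auxiliary q-node whose two ends are an empty p-child and a full p-child, and for a q-node by verifying that, read in order, its children form a (possibly empty) run of empty children, then an optional partial child, then a run of full children, then an optional partial child, then a (possibly empty) run of empty children, and then pushing partiality down into at most those two boundary children; or, at $\rho$, a dedicated root template that fuses the full children into one contiguous run. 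If at some pertinent node \emph{no} template matches --- e.g.\ a node with three or more partial children, or a q-node whose non-empty children are not consecutive --- the algorithm stops and reports failure.

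For correctness I would carry two invariants by induction along the bottom-up order. \emph{Soundness}: every template application preserves the set of realizable orders that in addition make the processed node's full descendants a contiguous segment abutting the required side; this is finitely many cases, each a direct check against the definitions of p- and q-node moves. \emph{Completeness}: if $\order{T}\cap P\neq\emptyset$, then at every pertinent node \emph{some} non-failure template applies --- this is the substantive case analysis, in which a feasible global arrangement is shown to force every partial node to have at most two partial children and, for q-nodes, to force the empty, partial, and full children into the stated consecutive pattern. Combining the two invariants at $\rho$ yields $\order{T'}=\order{T}\cap P$, and the failure branch fires exactly when the intersection is empty.

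Finally, the running-time claim is where the real work lies, and I expect it to be the main obstacle. One reduction may cost only $\bigO{\abs{U}}$, so the procedure must touch nothing outside the pertinent subtree (whose size is already $\bigO{\abs{U}}$) and must spend only $\bigO{1}$ per inspected child. I would secure this by walking up from the full leaves through parent pointers while each node keeps a counter of how many of its pertinent children are still unprocessed, so that a node is enqueued exactly once, when its last pertinent child is finished; by keeping every node's children in a doubly linked list, so that splicing in the auxiliary nodes produced by the p-templates and reversing a consecutive run inside a q-node each cost $\bigO{1}$ per affected child; and by restricting each template to read and rewire only the pertinent children plus $\bigO{1}$ of their immediate empty siblings. (The one genuine wrinkle is that in the compact pq-tree representation the interior children of a q-node carry no parent pointer, so the bottom-up walk needs the standard two-pass ``bubble'' procedure to reach $\rho$.) With this bookkeeping the cost is proportional to the number of pertinent nodes and edges, i.e.\ $\bigO{\abs{U}}$; and since the same sweep that fails to match a template is what reports infeasibility, the negative case costs no more.
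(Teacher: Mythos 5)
This is one of the two results the paper attributes to \citet{booth1976testing} and deliberately does \emph{not} prove; the text explicitly defers to the original reference for the reduction algorithm. So there is no in-paper proof to compare against, and the right question is only whether your reconstruction of the classical Booth--Lueker argument is sound. It essentially is: the full/empty/partial colouring, the pertinent subtree rooted at $\rho$, the constant-size template catalogue applied bottom-up, the soundness/completeness pair of invariants, and the failure conditions (three or more partial children; a q-node whose pertinent children are not arranged as a single boundary-respecting run) are exactly the standard proof skeleton. Two small calibration points. First, the five-block pattern you state for q-nodes (empty, partial, full, partial, empty) is the \emph{root} template; for a partial q-node strictly below $\rho$ the admissible pattern is more restrictive (at most one partial child, with the full run forced to one end), and conflating the two is where a careless implementation of ``completeness'' would go wrong. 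Second, you treat the linear-time bookkeeping as the main obstacle, but for the statement as quoted --- $\bigO{\abs{U}}$ per single reduction, which is all the paper uses --- a whole-tree traversal is already affordable, so the doubly-linked child lists, the unprocessed-child counters, and the two-pass bubble around parentless interior q-children are only needed for the stronger amortized bound (total time linear in $\sum_i\abs{S_i}$ over a sequence of reductions) that Booth and Lueker actually prove. Your sketch therefore proves more than the paper needs, at the cost of the genuinely heavy case analysis being waved at rather than carried out --- which is acceptable for a result the paper itself imports as a black box.
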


The detailed description of the algorithm for updating the pq-tree
can be found in~\citep{booth1976testing}.

\subsection{Finding the optimal row}

In this section we describe the algorithm that solves \obmfstep.  Assume that
we have a pq-tree $T$ representing the permutations of columns in $\mD$
allowed by the previously discovered rows in $\mY'$.
When dealing with pq-trees it is notationally easier to deal with sets rather
than with vectors. Naturally every binary vector $\vy$ can be represented as
a set $S = \set{i : y_i = 1}$.

Let us define $U$ to be the column indices of $\mD$; these are exactly the
leaves of $T$.  We say that a set $S \subseteq U$ is \emph{compatible} with a
pq-tree $T$, if there is an order in $\order{T}$ where $S$ is contiguous.
Obviously, compatible sets $S$ correspond exactly to suitable new rows in $\mY$.

We can express \obmfstep as an instance of the following problem.

\begin{problem}[\optset]
Given a universe $U$, weights $w(u)$ for each $u \in U$, and a pq-tree $T$ over
the universe $U$, find a set $S$ that is compatible with $T$ and maximizes
the total weight $\sum_{u \in S} w(u)$.
\end{problem}

Recall that $u \in U$ corresponds to a column index of $\mD$. Define $w(u)$
to be the gain in the error-function if we were to use $u$ in our new row for
$\mY$. More formally, let $\vx$ be the fixed counterpart in $\mX$ for the new
row in $\mY$.
Let $p$ be the number of ones in $\mD$ at rows $\vx$ and column $u$
that are not yet covered by the previous factors.
Let $n$ be the number of zeros in $\mD$ at rows $\vx$ and column $u$
that are not yet covered by the previous factors. We define $w(u) = p - n$.
Solving \optset with these weights solves \obmfstep.

In order to solve \cobmfstep, we solve \optset using $w(u) = p - n$, as above, yielding a set, say $S_1$.
In addition, we also solve \optset using $w(u) = n - p$, yielding a set, say $S_2$.
Then, we use either $S_1$ or $U \setminus S_2$, whichever yields a better gain.

In order to solve \optset, we need an additional definition:
Let $S$ be a compatible set of a pq-tree $T$.
If there is a permutation in $\order{T}$ with the first or the last element
in $S$, we call $S$ a \emph{border-compatible} set. 

Let $T$ be a pq-tree. To solve \optset
we will compute 3 counters for a node $v$ in $T$, 
namely, $\inner{v}$, $\border{v}$, and $\total{v}$.
The counter $\total{v}$ corresponds to the total weight of leaves under $v$,
while the counter $\inner{v}$ corresponds to the best $S$ that is compatible with
the subtree starting at $v$.
Finally,
$\border{v}$ corresponds to the best $S$ that is border-compatible with
the subtree starting at $v$.

We should stress that, strictly by definition, $\inner{v}$ can represent an empty set,
whereas $\total{v}$ and $\border{v}$ should be never empty, even if they produce
a negative value. Thus, $\inner{v} \geq 0$ but $\border{v}$ and $\total{v}$
can have negative values. Moreover, it is possible that $\border{v}$ represents
every leaf of $v$, in which case, $\border{v} = \total{v}$.

Naturally, we want to compute $\inner{r}$, where $r$ is the root of $T$.
To obtain this value we compute each value iteratively, children first.
We also maintain the lists of the children that were responsible for producing
the optimal value. These lists are clear from the proofs of the following lemmata.
This allows us to extract the optimal $S$.

First, note that computing $\total{v}$ is trivial since $\total{v} = \sum_{c \in \children{v}}\total{c}$.
If $v$ is a leaf-node, then $\border{v} = \total{v}$ and $\inner{v} = \max(0, \total{v})$.

The next two lemmata establish how to compute the counters for q-nodes.

\begin{lemma}
\label{lem:borderq}
Let $v$ be a q-node and let $c_1, \ldots, c_\ell$ be its children. Then
\[
\begin{split}
	\border{v} & = \max(x, y), \quad\text{where} \\
	x & = \max_{i} \border{c_i} +  \sum_{j = 1}^{i - 1} \total{c_j}, \\
	y & = \max_{i} \border{c_i} +  \sum_{j = i + 1}^{\ell} \total{c_j}\quad.
\end{split}
\]
\end{lemma}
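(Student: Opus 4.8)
The plan is to characterize the border-compatible sets of the subtree rooted at a q-node $v$ directly from its structure, and then read off the recurrence. The first thing I would record is a small structural fact about pq-trees: the reverse of any valid ordering is again a valid ordering. For a leaf this is vacuous, for a q-node it is exactly the permitted flip, and for a p-node one reverses the order of the children and, recursively, reverses the ordering chosen inside each child. Consequently $S$ is border-compatible with a subtree if and only if $S$ is a prefix of \emph{some} valid ordering of that subtree (equivalently, a suffix of some valid ordering) — I would adopt the reading of "border-compatible" as "$S$ is a prefix or a suffix of a single valid ordering", and the reversal fact collapses this to just "prefix". This lets the rest of the argument work only with prefixes.

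Next I would unfold what a valid ordering of the subtree at the q-node $v$ looks like: it lists the leaves of the children in the order $c_1,\ldots,c_\ell$ or in the reversed order $c_\ell,\ldots,c_1$, where inside each child $c_i$ the leaves appear consecutively, arranged by some valid ordering of the subtree at $c_i$. Fix the forward child order. A nonempty prefix $S$ of such an ordering then consists of the leaves of $c_1,\ldots,c_{i-1}$ in full, together with a prefix of the block of $c_i$, where $i$ is the first index at which not all of $c_i$'s leaves are taken. Since the block of $c_i$ is ordered by a valid ordering of the subtree at $c_i$, a prefix of it is precisely a border-compatible set of that subtree; the best one has weight $\border{c_i}$, while the completed blocks contribute $\sum_{j=1}^{i-1}\total{c_j}$, which depends only on which blocks are taken. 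Maximizing over $i$ gives $x$; the reversed child order gives $y$ identically. Running arbitrary $S$ through this decomposition yields the upper bound $\border{v}\le\max(x,y)$, and since the maximizer of $x$ (resp.\ $y$) is an honest border-compatible set, $x\le\border{v}$ and $y\le\border{v}$, giving the matching lower bound. I would also note that the child indices attained by the maxima are exactly the bookkeeping needed to reconstruct the optimal $S$.

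The step needing care — and the main obstacle — is the boundary case where the best prefix consists of whole blocks only, i.e.\ $S$ is the union of the leaves of $c_1,\ldots,c_m$ with no partial last block; then the "prefix of $c_i$" in the decomposition is empty and is not a valid nonempty border-compatible set. The fix is the observation that $\total{c_i}\le\border{c_i}$ always, because the full leaf set of $c_i$ is itself border-compatible with the subtree at $c_i$ (it is the whole ordering, hence a prefix). Then $w(S)=\sum_{j=1}^{m}\total{c_j}\le\sum_{j=1}^{m-1}\total{c_j}+\border{c_m}$, which is one of the terms defining $x$, so this case is already subsumed (and symmetrically by $y$ for suffix-only sets). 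The remaining checks are routine: every set produced by the formula is genuinely border-compatible because a prefix is contiguous and touches an end; $\border{v}$ is nonempty because each $\border{c_i}$ is nonempty by induction; and when the entire leaf set is optimal the formula correctly returns $\total{v}$.
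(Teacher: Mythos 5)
Your proof is correct and follows essentially the same decomposition as the paper's (one-sentence) argument: an optimal border-compatible set splits into fully-taken children on one side plus a border-compatible set of one child, with the two q-node orientations yielding $x$ and $y$. Your additional care with the reversal symmetry and with the whole-blocks-only boundary case (via $\total{c_i}\le\border{c_i}$) fills in details the paper leaves implicit, but does not change the approach.
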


\begin{lemma}
\label{lem:innerq}
Let $v$ be a q-node and let $c_1, \ldots, c_\ell$ be its children. Then
\[
\begin{split}
	\inner{v} & = \max(x, y), \quad\text{where} \\
	x & = \max_i \inner{c_i}, \\
	y & = \max_{i < j}  \border{c_i} + \border{c_j} + \sum_{\ell = i + 1}^{j - 1} \total{c_\ell}\ . 
\end{split}
\]
\end{lemma}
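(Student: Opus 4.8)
The plan is to read $\inner{v}$ as the maximum of $w(S)=\sum_{u\in S}w(u)$ over all sets $S$ that are contiguous in at least one order in $\order{T_v}$, allowing $S=\emptyset$; we prove $\inner{v}\ge\max(x,y)$ and $\inner{v}\le\max(x,y)$ separately. Throughout I will use the elementary fact, immediate from the definition of pq-trees, that the order set of any (sub)tree is closed under reversal: one may flip a q-node or reverse the child-permutation of a p-node. Applied to a child $c_i$, this means a set $S$ is border-compatible with $c_i$ exactly when $S$ can be placed flush against \emph{either} the left or the right end of $c_i$'s leaf block, interchangeably.

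For the lower bound I exhibit compatible sets matching $x$ and $y$. For $x$: given a child $c_i$ and a set $S$ realizing $\inner{c_i}$, fix an order of $T_{c_i}$ making $S$ an interval, place the other children of $v$ around $c_i$ in the q-node's fixed order with arbitrary internal orders; since $S$ lies entirely inside $c_i$'s block it is still an interval, so $\inner{v}\ge\inner{c_i}$ for every $i$ and hence $\inner{v}\ge x$. For $y$: given $i<j$, take a border-compatible set $S_i$ of $c_i$ realizing $\border{c_i}$ arranged at the right end of $c_i$'s block, a border-compatible set $S_j$ of $c_j$ realizing $\border{c_j}$ at the left end of $c_j$'s block, all leaves of $c_{i+1},\dots,c_{j-1}$ between them, and anything for the remaining children; then $S_i\cup(\text{leaves of }c_{i+1},\dots,c_{j-1})\cup S_j$ is a single interval, so $\inner{v}\ge\border{c_i}+\border{c_j}+\sum_{\ell=i+1}^{j-1}\total{c_\ell}$ and hence $\inner{v}\ge y$. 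Recording which children (and which of their own witness sets) were used here is exactly the reconstruction list referred to after the statement.

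For the upper bound take an optimal compatible $S$ with a witnessing order $\pi\in\order{T_v}$. If $S=\emptyset$ then $\inner{v}=0\le x$ since every $\inner{c_i}\ge 0$. Otherwise, using reversal-closure, assume $\pi$ lists the children as $c_1,\dots,c_\ell$, and let $i\le j$ be the least and greatest indices whose leaf block meets $S$. If $i=j$, then $S$ is contained in $c_i$'s block, $\pi$ restricted there is an order of $T_{c_i}$, so $S$ is compatible with $c_i$ and $w(S)\le\inner{c_i}\le x$. If $i<j$, then $S$ is an interval of $\pi$ meeting the blocks of both $c_i$ and $c_j$, so it must contain the whole blocks of the strictly-interior children $c_{i+1},\dots,c_{j-1}$, contributing exactly $\sum_{\ell=i+1}^{j-1}\total{c_\ell}$; its trace on $c_i$ is nonempty, contiguous, and flush with the $c_{i+1}$ side, hence sits at an end of $c_i$'s order and is border-compatible with $c_i$, contributing at most $\border{c_i}$, and symmetrically at most $\border{c_j}$ on $c_j$. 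Summing gives $w(S)\le y$, so in all cases $\inner{v}=w(S)\le\max(x,y)$.

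The only real subtlety — and the step I would spend the most care on — is the boundary bookkeeping in the $i<j$ case: arguing that the trace of $S$ on an extreme child is \emph{forced} to lie at the end of that child's block facing the interior, which is precisely what makes border-compatibility (asking only for \emph{some} end) the correct notion, and that the interior children must be swallowed whole. The remaining edge cases are routine once reversal-closure is available: $j=i+1$ gives an empty interior sum; $S$ coinciding with all leaves of an extreme child is allowed because $\border{c_i}=\total{c_i}$ is permitted by the earlier remark; and the $S=\emptyset$ case is handled by the observation $x\ge 0$.
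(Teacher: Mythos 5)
Your proof is correct and follows essentially the same route as the paper's (much terser) argument: the optimal compatible set either lies within a single child, yielding $x$, or spans children $c_i$ through $c_j$ with the interior children swallowed whole and border-compatible traces on the two extremes, yielding $y$. Your added care with the lower-bound constructions, reversal-closure, and the $S=\emptyset$ case fills in details the paper leaves implicit but does not change the approach.
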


Our next step is to compute the counters for p-nodes.
For that we need to define the following helper function:
given a node $v$ we define $g(v) = \border{v} - \max(\total{v}, 0)$.
We will use $g(v)$ in the next two lemmata describing on how to compute 
the counters for p-node.

\begin{lemma}
\label{lem:borderp}
Let $v$ be a p-node and let $c_1, \ldots, c_\ell$ be its children.
Define $b = \max g(c_i)$.
Then
\[
	\border{v} = b + \sum_i \max(\total{c_i}, 0)\quad. 
\]
\end{lemma}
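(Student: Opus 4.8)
The plan is to prove the closed form in two stages: first establish the intermediate identity
\[
\border{v} = \max_{1\le i\le \ell}\Bigl(\border{c_i} + \sum_{j\ne i}\max(\total{c_j},0)\Bigr),
\]
and then simplify it algebraically to the stated expression. The guiding intuition is a structural decomposition of border-compatible sets at a p-node: since such a set $S$ is an interval of the chosen leaf order that reaches one of its two ends, after a suitable permutation of the children it must consist of a block of children whose leaves lie \emph{entirely} in $S$, followed by exactly one ``boundary'' child whose intersection with $S$ is a prefix of that child's leaves (hence a border-compatible set of its own subtree), with all remaining children disjoint from $S$.

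For the ``$\ge$'' direction I would fix $i$, take an optimal border-compatible set $S_i$ of $c_i$ (so $w(S_i)=\border{c_i}$), and add all leaves of every other child $c_j$ with $\total{c_j}\ge 0$. Because a p-node permits any ordering of its children, and because $S_i$ can be oriented so that it occupies a prefix of $c_i$'s leaf block (a contiguous set touching an end is a prefix or a suffix, and we may flip $c_i$), placing the fully-included children first, then $c_i$, then the rest produces an order in $\order{T}$ in which this union is an interval touching position one. Hence it is border-compatible, with weight $\border{c_i}+\sum_{j\ne i}\max(\total{c_j},0)$; maximizing over $i$ gives the bound.

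For the ``$\le$'' direction, take an arbitrary border-compatible $S$ with witness order $\sigma$; using that $\order{T}$ is closed under reversal, I may assume $S$ is a prefix of $\sigma$. Let $c_{\pi(m)}$ be the child containing the last leaf of $S$. Then $c_{\pi(1)},\dots,c_{\pi(m-1)}$ are fully inside $S$, the set $S\cap c_{\pi(m)}$ is a nonempty prefix of that child's (consecutive) leaf block and is therefore border-compatible in $T_{c_{\pi(m)}}$, and the remaining children are disjoint from $S$. Consequently $w(S)\le \sum_{k<m}\total{c_{\pi(k)}} + \border{c_{\pi(m)}} \le \border{c_{\pi(m)}} + \sum_{j\ne \pi(m)}\max(\total{c_j},0)$, where the last step replaces each $\total{}$ by $\max(\total{},0)$ and includes the missing nonnegative terms. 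This yields the identity. Finally, writing $\border{c_i}+\sum_{j\ne i}\max(\total{c_j},0) = \bigl(\border{c_i}-\max(\total{c_i},0)\bigr)+\sum_j\max(\total{c_j},0) = g(c_i)+\sum_j\max(\total{c_j},0)$ and taking the maximum over $i$ produces exactly $b+\sum_i\max(\total{c_i},0)$.

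The main obstacle is the bookkeeping around the boundary child $c_{\pi(m)}$: one must verify that $S$ restricted to it is genuinely border-compatible rather than merely compatible, that this restriction is nonempty so that $\border{}$ applies, and that the child can be reoriented to realize the required prefix when building the witness order in the ``$\ge$'' direction. One must also handle the degenerate cases cleanly — $m=1$ (no fully included children, just the boundary child), and children with negative $\total{}$ — so that the sign conventions of the excerpt ($\inner{} \ge 0$ but $\border{}$ and $\total{}$ possibly negative) are respected throughout.
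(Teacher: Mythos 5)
Your proof is correct and follows essentially the same route as the paper's: decompose a border-compatible set at a p-node into some fully included children plus exactly one partial (boundary) child, realize a candidate for each choice of boundary child, and observe that the marginal contribution of that child over the greedy baseline $\sum_i \max(\total{c_i},0)$ is exactly $g(c_i)$. Your write-up is more careful than the paper's sketch (which asserts the structural decomposition and the gain bookkeeping without the explicit two-sided verification or the reversal/orientation details), but the underlying argument is the same.
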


Note that since we require the set responsible for $\border{v}$
be non-empty, it is possible that $\border{v} < 0$. This can happen only if $b < 0$
and every child $w$ of $v$ has $\total{w} < 0$.

\begin{lemma}
\label{lem:innerp}
Let $v$ be a p-node and let $c_1, \ldots, c_\ell$ be its children.
Define $b_1$ and $b_2$ be the top-2 values of $g(c_i)$.
Then
\[
\begin{split}
	\inner{v} & = \max(x, y), \quad\text{where} \\
		x & = \max_i \inner{c_i}, \\
		y & = \max(b_1, 0) + \max(b_2, 0) + \sum_i \max(\total{c_i}, 0). 
\end{split}
\]
\end{lemma}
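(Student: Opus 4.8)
The plan is to mirror the structure of the proof of Lemma~\ref{lem:borderp}, but now accounting for the fact that an inner (merely compatible) set for a p-node need not touch the border of $v$, so it may use \emph{two} partial children instead of one. First I would recall the characterization of compatible sets at a p-node: since the children of a p-node can be permuted freely, a set $S\subseteq U$ that is compatible with the subtree rooted at $v$ either lies entirely within a single child $c_i$ (giving the term $x=\max_i\inner{c_i}$), or it is a disjoint union where each child $c$ of $v$ contributes either all of its leaves, none of its leaves, or a \emph{border-compatible} proper sub-part; and at most two children may be of the ``border-compatible sub-part'' kind — one sitting at each end of the contiguous run of children we place in the middle. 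This is exactly the classical fact about C1P testing with pq-trees, and it is the structural crux of the lemma.

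Given that characterization, the value $y$ is obtained by choosing, for each child independently, the better of ``take all its leaves'' and ``take none,'' which contributes $\sum_i \max(\total{c_i},0)$; and then upgrading up to two of the children to be partial. As in the proof of Lemma~\ref{lem:borderp}, upgrading child $w$ from its default choice to a border-compatible partial part changes the contribution by exactly $g(w)=\border{w}-\max(\total{w},0)$: if $\total{w}\ge 0$ the default was ``take all'' worth $\total{w}$ and we move to $\border{w}$; if $\total{w}<0$ the default was ``take none'' worth $0$ and we move to $\border{w}$. Since we may upgrade zero, one, or two children, and upgrading is only worthwhile when $g(w)>0$, the optimal extra gain from the partial children is $\max(b_1,0)+\max(b_2,0)$, where $b_1\ge b_2$ are the two largest values of $g(c_i)$. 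Adding this to the baseline gives the claimed formula for $y$, and $\inner{v}=\max(x,y)$.

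Two small points I would make explicit. First, why it is legitimate to combine a border-compatible part of $c_i$ (touching, say, $c_i$'s right end) with a border-compatible part of $c_j$ (touching $c_j$'s left end) and the full blocks of the children strictly between them: because $T'$ being a p-node lets us order its children so that $c_i,\ (\text{chosen full children}),\ c_j$ appear consecutively with the partial parts of $c_i$ and $c_j$ facing inward, and within $c_i$ and $c_j$ we realize the border-compatible order; this produces an order in $\order{T}$ witnessing contiguity of $S$. Second, the case $b_1<0$ (equivalently, no child has positive $g$): then no upgrade helps, the formula correctly collapses to $y=\sum_i\max(\total{c_i},0)$, and since $\inner{\cdot}$ is allowed to be the empty set, $\inner{v}\ge 0$ holds automatically, consistent with $x\ge 0$ from the leaf base case propagating upward.

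The main obstacle is purely the combinatorial justification that \emph{at most two} children can be partial in an optimal compatible set and that exactly these two must be border-compatible — i.e., turning the intuitive ``pick a contiguous stretch of children and fray the two ends'' picture into a rigorous statement about elements of $\order{T}$. Everything after that is the same bookkeeping as in Lemma~\ref{lem:borderp}: reading off that the per-child default contributes $\max(\total{c_i},0)$ and each upgrade contributes $g(c_i)$, hence the top-two-$g$ selection. I would also note that the list of responsible children (needed to reconstruct $S$) is read off exactly as in that proof: the full/empty choice per child plus the (at most two) indices achieving $b_1,b_2$ when those values are positive.
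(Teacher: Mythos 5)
Your proposal is correct and follows essentially the same route as the paper's proof: the same case split into $S$ contained in one child versus $S$ being a union of full children plus at most two border-compatible partial children, and the same bookkeeping via $g(w)=\border{w}-\max(\total{w},0)$ carried over from Lemma~\ref{lem:borderp}. If anything, you are more explicit than the paper about the structural claim that at most two children can be partial and about the degenerate case $b_1<0$, but this is elaboration rather than a different argument.
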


Note that using these lemmas every counter can be trivially solved in linear
time, except for $\inner{v}$, where $v$ is q-node. To compute $\inner{v}$
in linear time, it is enough if we can solve
\[
	\border{c_j} + \max_{i < j}  \border{c_i} + \sum_{\ell = i + 1}^{j - 1} \total{c_\ell}
\]
in \emph{constant} time for a \emph{fixed} $j$. Luckily, we can rewrite this function as
\[
	\border{c_j} + \pr{\sum_{\ell = 1}^{j - 1} \total{c_\ell}} + \max_{i < j} t(i, j),
\]
where 
\[	
	t(i, j) = \max_{i < j}  \border{c_i} - \sum_{\ell = 1}^{i} \total{c_\ell}\quad.
\]
Let $i(j)$ to be the optimal $i$ for a fixed $j$.
Since
\[
	\max_{i < j} t(i, j) = \max \pr{t(j - 1, j) \max_{i < j - 1} t(i, j)},
\]
we have either $i(j) = i(j - 1)$ or $i(j) = j - 1$.
If we were to test each $j$ consecutively, then
this allows us to compute $i(j)$ in constant time: we simply
compare the solution $i = j - 1$ to the best previous solution $i(j - 1)$.

In summary, each counter of $v$ can be computed in $\bigO{\abs{\children{v}}}$.
Thus we need $\bigO{\ell}$, where $\ell$ is the number of nodes in $T$. Since $\ell \in \bigO{\abs{U}}$,
we can compute the counters in $\bigO{\abs{U}}$ time, where $\abs{U}$ is the number of columns in $\mD$.

When computing the counters we also store which children were responsible for this value.
Once we have computed $\inner{r}$, where $r$ is the root of the tree, we can backtrack
to obtain the optimal $S$. This can be also done in linear time.

Computing the weights $w$ in \optset can be done in $\bigO{p}$ time, where $p$
is the number of 1s in the dataset $\mD$ of size $\by{n}{m}$. Consequently,
\obmfstep can be done in $\bigO{p + n + m}$ time.


\section{Symmetric decomposition}
\label{sec:extensions}

We now propose an extension for symmetric
matrices.

\subsection{Definition}


If $\mD$ is symmetric (e.g.\ an adjacency matrix of an undirected graph), we have the following problem:
\begin{problem}[Symmetric \obmf, \obmfsym]
  \label{prob:obmf-sym}
Given a binary matrix $\mD$ and an integer $k\in\N$, find
two binary
matrices $\mX$ and $\mY$ such that $[\mX; \mY]$ is unimodal, that minimize the number of disagreements
  \begin{equation}
    \label{eq:obmf-sym}
    \norm*[big]{\mD - \bigl((\mX^T\bprod\mY) \bsum (\mY^T\bprod\mX)\bigr)}_F^2\; .
  \end{equation}
\end{problem}
We define similarly \cobmfsym, a cyclic and symmetric variant of \obmf.

The unimodality condition in \obmfsym states that we should be able to permute $\mX$ and $\mY$
with the \emph{same} permutation so that the rows are in form of $[0, \ldots, 0,
1, \ldots, 1, 0, \ldots, 0]$.

Notice that we do not use the more common symmetric decomposition
$\mD\approx\mX^T\bprod\mX$ as this would lead to necessarily having the blocks
around the diagonal. 

\subsection{Algorithm}
\label{sec:symm:algorithm}

The discovery algorithm for symmetric \obmf is similar. Like with the regular
\obmf, we use a greedy algorithm as an iterative step for discovering new rows.

The first difference is that we maintain only one pq-tree, corresponding to the
rows in both $\mX$ and $\mY$.

The second difference is that -- as $\mX^T\bprod\mY$ and
$\mY^T\bprod\mX$ can have overlapping 1s -- maximizing \optset does not
necessarily produce the optimal row. Instead, we can show that solving
\optset, with the weights as described in the previous section, minimizes
$\norm*[big]{\mD - \mX^T\bprod\mY}_F^2 + \norm*[big]{\mD -
\mY^T\bprod\mX}_F^2$. It follows easily that minimizing this function
yields a 2-approximation for finding optimal counterpart row.


\section{Experimental evaluation}
\label{sec:experiments}

In this section we study how well the algorithms from Sections~\ref{sec:algorithm} and \ref{sec:symm:algorithm} work with synthetic and real-world data. We denote the algorithms with the same names as the problems they are solving, and differentiate the algorithms from the problems via the font. That is, \aobmf is the algorithm for \obmf, and so on. The algorithms are implemented in C++, and we make the source code and synthetic experiments freely available.\!\footnote{\paperurl}

\subsection{Resilience to Noise}
\label{sec:exp:sym}

We start by evaluating the algorithms' resilience to noise. To that end, we
synthesized random matrices of size $95 \times 95$ with block structure (6 blocks of size $20\times20$ along the diagonal, with 5 overlapping rows and columns) and corrupted those matrices
with flipping a varying amounts of entries. The amount of flipped entries varied from \SIrange{0}{50}{\percent}
(of total elements) and we compared the quality of the results to both the
noise-free matrix and noisy matrix. The results are shown in
Figure~\ref{fig:synth_data}. 

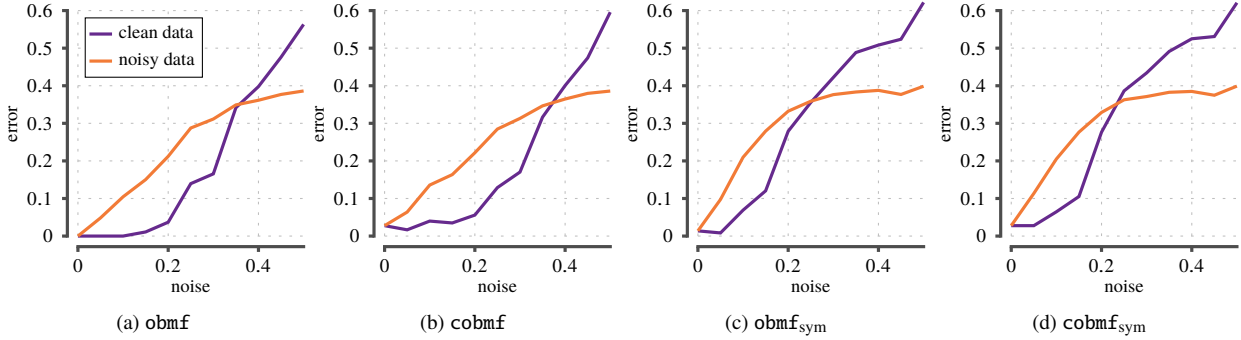
\begin{figure*}[tb!]
  \begin{center}
\pgfplotsset{
    xlabel={noise},ylabel= {error},
	width = 3cm,
    height = 3cm,
    cycle list name=yaf,
    scale only axis,
    tick scale binop=\times,
    x tick label style = {/pgf/number format/set thousands separator = {\,}},
    y tick label style = {/pgf/number format/set thousands separator = {\,}},
    scaled ticks = false,
    xmax = 0.5,
    ymax = 0.6,
	ymin = 0.0,
    ytick = {0, 0.1, ..., 0.7},
    legend pos = north west,
    no markers}
\subfloat[\aobmf]{%
\begin{tikzpicture}
\begin{axis}
\addplot table[x index = 0, y expr = {\thisrowno{1} / (95 * 95)}, header = false] {figs/new_block_error.dat};
\addplot table[x index = 0, y expr = {\thisrowno{2} / (95 * 95)}, header = false] {figs/new_block_error.dat};
\legend {clean data, noisy data};
\pgfplotsextra{\yafdrawaxis{0}{0.5}{0}{0.6}}
\end{axis}
\end{tikzpicture}%
}%
\subfloat[\acobmf]{%
\begin{tikzpicture}
\begin{axis}
\addplot table[x index = 0, y expr = {\thisrowno{1} / (95 * 95)}, header = false] {figs/cyc_block_error.dat};
\addplot table[x index = 0, y expr = {\thisrowno{2} / (95 * 95)}, header = false] {figs/cyc_block_error.dat};
\pgfplotsextra{\yafdrawaxis{0}{0.5}{0}{0.6}}
\end{axis}
\end{tikzpicture}
} 
\subfloat[\aobmfsym]{%
\begin{tikzpicture}
\begin{axis}
\addplot table[x index = 0, y expr = {\thisrowno{1} / (95 * 95)}, header = false] {figs/sym_block_error.dat};
\addplot table[x index = 0, y expr = {\thisrowno{2} / (95 * 95)}, header = false] {figs/sym_block_error.dat};
\pgfplotsextra{\yafdrawaxis{0}{0.5}{0}{0.6}}
\end{axis}
\end{tikzpicture}
}
\subfloat[\acobmfsym]{%
\begin{tikzpicture}
\begin{axis}
\addplot table[x index = 0, y expr = {\thisrowno{1} / (95 * 95)}, header = false] {figs/sym_cyc_block_error.dat};
\addplot table[x index = 0, y expr = {\thisrowno{2} / (95 * 95)}, header = false] {figs/sym_cyc_block_error.dat};
\pgfplotsextra{\yafdrawaxis{0}{0.5}{0}{0.6}}
\end{axis}
\end{tikzpicture}
}
\end{center}
\caption{Error as a function of noise. Here the error is the proportion of disagreements between the reconstructed matrix
and either the noise-free or the noisy matrix. The decomposition was done using the noisy matrix.}
\label{fig:synth_data}
\end{figure*}

With lower leves of noise (\SI{35}{\percent} for \aobmf and \acobmf and \SI{25}{\percent} for the symmetric variants), the reconstruction of the original data is more accurate. With higher levels of noise, the noise has destroyed so much of the structure that the algorithms start fitting to the noise only, with a clear reduction of the quality versus the original data. 

It is also worth noticing that \aobmf obtains exact decompositions when the data has no noise; the other methods introduce a slight error even in these cases emphasizing their more complex setting.

\subsection{Scalability}
\label{sec:exp:scalability}

In this section we test how well \aobmf scales to larger data sets and how well it benefits from multiple cores. These experiments were executed on a server with 40 cores of Intel Xeon E7-4870 processors running at \SI{2.4}{\giga\hertz}. The algorithm was compiled using GCC 8.1.0 and the parallel code uses the OpenMP library.

To test the scalability, we generated \by{n}{n} square matrices with $n=2^i$ for $i=9,\ldots, 13$. All matrices have a density of approximately \SI{24}{\percent}. The results are presented in Figure~\ref{fig:scalability:size}. 

\begin{figure*}
  \begin{center}
    \subfloat[Scalability w.r.t. size\label{fig:scalability:size}]{%
      \begin{tikzpicture}
\begin{axis}[xlabel={size},ylabel={time (sec)},
    width = 6cm,
    height = 2.5cm,
    cycle list name=yaf,
    scale only axis,
    tick scale binop=\times,
    x tick label style = {/pgf/number format/set thousands separator = {\,}},
    y tick label style = {/pgf/number format/set thousands separator = {\,}},
    scaled ticks = false,
    xmax = 8192,
    ytick = {0, 200, ..., 1400},
    xtick = {512, 1024, 2048, 4096, 8192},
    xticklabels = {$2^{9}$, $2^{10}$, $2^{11}$, $2^{12}$, $2^{13}$},
    pin distance = 3mm,
    legend pos = north west,
    no markers
    ]
\addplot table[x index = 0, y index = 1, header = false] {figs/scaling.dat};
\pgfplotsextra{\yafdrawaxis{0}{0.5}{0}{1330}}
\end{axis}
\end{tikzpicture}

    }%
    \subfloat[Scalability w.r.t. number of cores\label{fig:scalability:cores}]{%
      \begin{tikzpicture}
\begin{axis}[xlabel={cores},ylabel={time (sec)},
    width = 6cm,
    height = 2.5cm,
    cycle list name=yaf,
    scale only axis,
    tick scale binop=\times,
    x tick label style = {/pgf/number format/set thousands separator = {\,}},
    y tick label style = {/pgf/number format/set thousands separator = {\,}},
    scaled ticks = false,
    xmax = 32,
    ytick = {0, 4, ..., 28},
    xtick = {1, 2, 4, 8, 16, 32},
    pin distance = 3mm,
    legend pos = north west,
    no markers
    ]
\addplot table[x index = 0, y index = 1, header = false] {figs/cores.dat};
\pgfplotsextra{\yafdrawaxis{0}{0.5}{0}{27}}
\end{axis}
\end{tikzpicture}

    }%
  \end{center}
  \caption{Scalability with respect to the size and number of cores.}
  \label{fig:scalability}
\end{figure*}
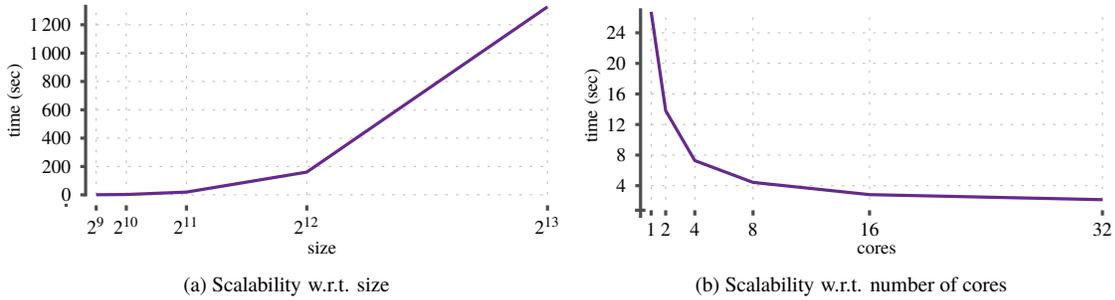

The algorithm shows very good scalability over the full range, although it does get slower when the data size increases from $2^{12}$ to $2^{13}$. It should be noted, though, that as the density is constant, the number of non-zeros in the matrices increases as the square of the matrix size. Hence, \aobmf exhibits linear growth with respect to the number of non-zero elements. 

Algorithm~\ref{alg:greedy} is almost embarrassingly parallel over the different seeds vectors. Hence, we parallellized the test of different seeds, and tested how the algorithm behaves with increased number of cores. The results are in Figure~\ref{fig:scalability:cores}, where we can see that the speed-up is essentially linear up to $4$ cores, slightly slower until $16$ cores, and only marginal gains are available when increasing the number of cores to $32$, indicating that at the algorithm has become memory bus constrained.

Overall, the experiments show that the algorithm scales very well, and is able to benefit from modern multi-core computers. We study further speed-up options later in Section~\ref{sec:exp:real:seeds}.

\subsection{Experiments with Real-World Data}
\label{sec:exp:real}

We now turn to real-world data sets. We used six different real-world data sets, selected to offer a wide variety of different types of data. The data sets we used are as follows. \lesmis is a standard benchmark data\footnote{\url{http://moreno.ss.uci.edu/data.html}} of the characters of Victor Hugo's novel \textit{Les Mis\'erables}. \now is a palaeontological data\footnote{NOW 030717, \url{http://www.helsinki.fi/science/now/}} in the form of a locations-by-genera matrix, giving information where different fossiles have been found. \news is a subset of the famous 20Newsgroups data\footnote{\url{http://qwone.com/~jason/20Newsgroups/}} consisting four newsgroups and \num{100} terms. \newssym the terms-by-terms co-occurrence matrix based on \news. \locations is locations-by-locations matrix indicating mammal species co-location in the northern hemisphere: the data has a $1$ in element $(i,j)$ if locations $i$ and $j$ have at least five mammals in common. The data is based on the IUC Red List data.\!\footnote{\url{http://www.iucnredlist.org/technical-documents/spatial-data}} The final data set, \mammals, contains a species-by-species co-inhabitation matrix.\!\footnote{Available for research purposes from the Societas Europaea Mammalogica at \url{http://www.european-mammals.org}} The data set properties are summarized in Table~\ref{tab:properties}.

\begin{table}[tb]
  \centering
  \caption{Properties of real-world data sets. Rank indicates the rank used in the decomposition.}
  \label{tab:properties}
  \small
  \begin{tabular}{@{}l
    S[table-figures-integer=4,table-figures-decimal=0,table-number-alignment=right]
    S[table-figures-integer=4,table-figures-decimal=0,table-number-alignment=right]
    S[table-figures-integer=2,table-figures-decimal=2,table-number-alignment=right]
    c
    S[table-figures-integer=2,table-figures-decimal=0,table-number-alignment=right]
    @{}}
    \toprule
    data      & {rows} & {cols} & {\% of 1s} & sym. & rank \\
    \midrule
    \lesmis & 77 & 77 & 8.57 & Yes & 10 \\
    \now    & 124 & 139 & 11.48 & No & 10 \\
    \news   & 100 & 348 & 6.30 & No & 10 \\
    \newssym & 100 & 100 & 48.54 & Yes & 10 \\
    \locations & 3203 & 3203 & 8.42 & Yes & 50 \\
    \mammals & 194 & 194 & 58.04 & Yes & 10 \\
    \bottomrule
  \end{tabular}
\end{table}

To the best of our knowledge, this is the first work to address the ordered Boolean matrix factorization problem. To understand what kind of an effect the ordering constraint has to the reconstruction error, we compare our results with those of \asso \cite{miettinen08discrete}. The \asso algorithm is a well-known method for computing the standard Boolean matrix factorization. We used an implementation available from the author\footnote{\url{https://cs.uef.fi/~pauli/basso/basso-0.5.tar.gz}} and set the rank for \asso the same as for our algorithms, and used threshold values $\tau=\{0.2, 0.4, 0.6, 0.8\}$.

For symmetric data sets, we also computed the symmetric Boolean factorization. This was done by first computing the standard $\mX^T\bprod\mY$ factorization, and then testing whether $\mX^T\bprod\mX$ or $\mY^T\bprod\mY$ gives smaller reconstruction error and using that one. This version of \asso is denoted \assosym.

\subsubsection{Reconstruction errors}
\label{sec:exp:real:errors}

We first compute the reconstruction errors for the various data sets. To facilitate the comparisons, we report the \emph{relative} reconstruction error
\[
  \frac{\norm*{\mD - \mX^T\bprod\mY}_F^2}{\norm{\mD}_F^2}\;.
\]
The results of all datasets are given in Table~\ref{tab:real:rel:asym}. 

\sisetup{
  table-figures-integer=1,
  table-figures-decimal=4,
  round-mode=places,
  round-precision=2
}
\begin{table*}
  \centering
  \caption{Relative errors with asymmetric (left) and symmetric (right) algorithms on real-world data.}
  \label{tab:real:rel:asym}
  \setlength{\tabcolsep}{3pt}
  \small
  \begin{tabular}{@{}l*{6}{S}@{}}
    \toprule
    & {\lesmisa} & {\now} & {\newsa} & {\newssym} & {\locations} & {\mammals} \\
    \midrule
    \aobmf & 0.3562 & .7123 & .7428 & .3166 & .4026 & .2569 \\
    \acobmf & 0.3562 & .7194 & .7428 & .3199 & .4027 & .2556 \\
    \asso & 0.3287 & .7087 & .7232 & .2865 & .3444 & .2617 \\
    \bottomrule
  \end{tabular}

  \begin{tabular}{@{}l*{4}{S}@{}}
    \toprule
    & {\lesmisa} & {\newssym} & {\locations} & {\mammals} \\
    \midrule
    \aobmfsym & .4035 & .3463 & .4762 & .2670 \\
    \acobmfsym & .4055 & .3638 & .4525 & .2716 \\
    \assosym & .6614 & .3339 & 1.0183 & .3317 \\
    \bottomrule
  \end{tabular}

\end{table*}


In case of asymmetric decompositions, \asso is -- as expected, as its factor matrices are not restricted to unimodal or cyclic -- almost always slightly better than either \aobmf or \acobmf. This difference is, however, very small in many data sets (only \SI{8}{\percent} in \lesmis and \SI{0.5}{\percent} in \now). A remarkable exception is the \mammals data, where \asso is in fact worse than either \aobmf or \acobmf. As the data set is the densest of the ones we tested, it is possible that \asso was unable to obtain good candidates from it with the rounding thresholds we tried.

There is almost no difference between \aobmf and \acobmf in the terms of reconstruction error in these data sets. Usually, \aobmf is on par or slightly better than \acobmf, except again in \mammals, where \acobmf is slightly better. The asymmetric data sets, \now and \news, cause the highest reconstruction errors at over \SI{70}{\percent}. It should be noted, though, that also \asso has similarly high errors with these data sets, indicating that they might not have strong Boolean low-rank structure.

In symmetric decompositions, the relationship between the ordered BMF algorithms and \asso is reversed, with \assosym being often the worse method (with the exception of \newssym). This is not very surprising, given that \asso is not designed for symmetric decompositions. The errors are slightly worse than with the asymmetric algorithms, highlighting the complexity of finding the symmetric decompositions. 

\subsubsection{Changing the seeds}
\label{sec:exp:real:seeds}

In the above experiments, we used the columns as the seeds $S$ for the algorithm (cf. Algorithm~\ref{alg:greedy}). This slows the algorithm down, as it has to attempt all of the potential seeds. In this section we study if we can improve the running time without hurting the reconstruction error by sampling only some of the columns for the seed set $S$.

In particular, we sampled \SI{10}{\percent} of the columns uniformly at random to create the seed set. As the algorithm scales linearly with the number of seeds, this provides an order of magnitude speed-up. To test the quality, we repeated the sampling ten times and report the average relative reconstruction errors and standard deviations in Table~\ref{tab:real:seed_rnd:rel:asym}.

\sisetup{
  separate-uncertainty=true,
  table-figures-integer=1,
  table-figures-decimal=2,
  table-number-alignment=center,
}
\begin{table*}
  \centering
  \caption{Average relative errors and standard deviation with random columns as seeds for asymmetric algorithms on real-world data. Ten random samples.}
  \label{tab:real:seed_rnd:rel:asym}
  \small
    \begin{tabular}{@{}l*{6}{S@{$\,\pm\,$}S}@{}}
    \toprule
    & \multicolumn{2}{l}{\lesmis} & \multicolumn{2}{l}{\now} & \multicolumn{2}{l}{\news} & \multicolumn{2}{l}{\newssym} & \multicolumn{2}{l}{\locations} & \multicolumn{2}{l}{\mammals} \\
    \midrule
    \aobmf & 0.5147 & 0.0836 & 0.7562 & 0.0173 & 0.8293 & 0.0223 & 0.3203 & .0 & 0.5301 & .0 & 0.2573 & .0 \\
    \acobmf & 0.4639 & 0.0489 & 0.7555 & 0.0223 & 0.8277 & 0.0200 & 0.3245 & 0.01 & 0.5149 & .0 & 0.2571 & .0 \\
    \bottomrule
  \end{tabular}

\end{table*}

The first thing to notice in Table~\ref{tab:real:seed_rnd:rel:asym} are the low standard deviations; less than \SI{3}{\percent} in almost all data sets. The reconstruction errors are also only slightly higher than those in Table~\ref{tab:real:rel:asym}; for instance, \aobmf with \now has only \SI{6}{\percent} higher error on average when using random sampling. In most cases the speed-up obtained by the sampling is significant compared to the loss in accuracy.

\subsection{Visualizing the Graphs}
\label{sec:exp:visual}

One of the motivations for the ordered BMF is that it allows the convenient visualization of the graphs using edge bundles (or ribbons) between nodes that are placed in a circle. In this section we explore some of these visualizations and explain what we can learn from the respective data sets using them. In the following plots, the edge bundles and the ordering are obtained form the factorization. Further visualizations can be found in Appendix~\ref{sec:appendix:visualizations}.

\textbf{The \lesmis data:}
The visualization of the \lesmis data is presented in Figure~\ref{fig:lesmis}. Most edge bundles form a circular segment indicating that all of the nodes under the segment are connected to each other (the characters appear in the same parts of the book). Some of the bundles are contained in other bundles, indicating important subset of characters. Multiple bundles intersect on a node at south-east of the circle called \emph{Valjean} -- the protagonist of the book. 

\begin{figure}[ht!]
  \centering
  \begin{tikzpicture}[scale=0.8]
    \input{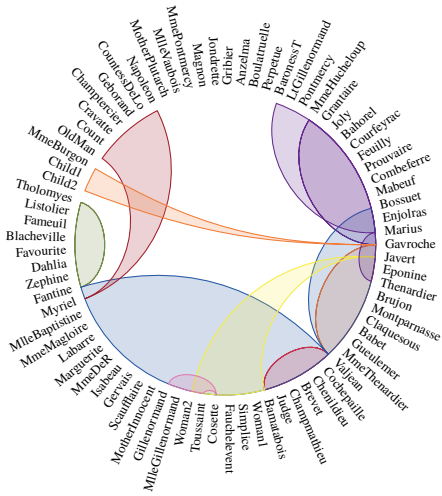}
  \end{tikzpicture}
  \caption{Visualization of the \lesmis data with the ribbons and ordering from \acobmf.}
  \label{fig:lesmis}
\end{figure}

\textbf{The \mammals data:}
The second data set is the \mammals data, in Figure~\ref{fig:mammals}. For a clearer visualization, we only consider \num{134} species that do not appear too frequently in the data, as such species are neighbours of every other species in graph. The edge bundles in Figure~\ref{fig:mammals} are essentially rotating around the middle. This probably corresponds to the change of fauna when moving from north to south. The change is gradual, hence two consecutive edge bundles have a significant overlap, but over longer distance, the change in the fauna becomes more obvious and the edge bundles are more disjoint.  This gives a good intuition about the structure of the  data.

\begin{figure}[ht!]
  \centering
  \begin{tikzpicture}[scale=0.8]
    \input{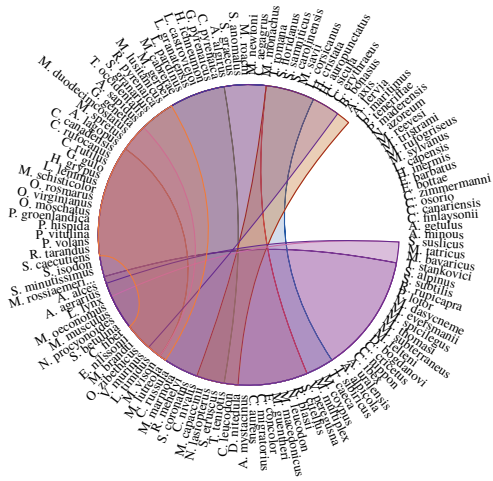}
  \end{tikzpicture}
  \caption{Visualization of the \mammals data with the ribbons and ordering from \aobmf.}
  \label{fig:mammals}
\end{figure}


\section{Related Work}
\label{sec:related-work}

\emph{Boolean matrix factorization} (BMF) has received increasing interest in the data analysis community~\cite{miettinen08discrete,lucchese13unifying,belohlavek10discovery,rukat17bayesian,ravanbakhsh16boolean,maurus16ternary,hess17c-salt,hess17primping,miettinen14mdl4bmf,karaev15getting}, proving to be a versatile tool for analyzing Boolean matrices. Many different algorithms have been proposed, including algorithms based on candidate creation and selection \cite{miettinen08discrete,lucchese13unifying}, proximal alternations \cite{hess17primping}, and message passing \cite{ravanbakhsh16boolean}, to name but a few. It has also found applications in diverse fields, such as bioinformatics~\cite{corrado14ptrcombiner}, information extraction~\cite{cergani13discovering}, and lifted inference~\cite{broeck13complexity}. To the best of our knowledge, however, the ordering constraint is not studied in earlier work related to Boolean matrix factorization.

\emph{Tiling} databases~\cite{geerts04tiling} can be seen as a restricted
version of BMF, where the factorization cannot express any $0$s as $1$.
\emph{Geometric tiling} \cite{gionis04geometric}  is a variation thereof, where
the tiles have to be consecutive. The main difference to our work is a
different optimization function, \cite{gionis04geometric} uses log-likelihood,
and that it assumes that the order is already given, for example, by spectral ordering,
whereas we discover the order on the fly.

A binary matrix has the \emph{consecutive ones property} (C1P) if its columns can be permuted so that all rows have all 1s consecutively. The pq-trees can be used to check for the C1P~\cite{booth1976testing} and \citet{atkins98spectral} propose spectral ordering algorithm. The spectral ordering approach is used in \cite{gionis04geometric} to permute the data for finding the geometric tiles.


\section{Conclusions}
\label{sec:conclusions}

Ordered Boolean matrix factorization (\obmf) and its variations (\cobmf, \obmfsym) are restricted versions of Boolean matrix factorization, requiring the factors to have the consecutive ones property (or be cyclic, in case of \cobmf). This restriction facilitates the interpretation of the factorization, in particular in the case of the edge bundle visualizations of graphs, as we saw in Section~\ref{sec:exp:visual}. On the other hand, the restriction yields higher reconstruction errors, though our experiments show that the difference to state-of-the-art Boolean matrix factorization algorithm is usually very small.

In this paper we laid the theoretical foundations of the \obmf problem and its variations, and proposed algorithms based on the pq-trees. An important part of the proposed algorithm is the choice of the seed vectors. In this paper, we mostly used all columns of the data as the seed, though the experiments in Section~\ref{sec:exp:real:seeds} show that sampling the columns could work equally well. An interesting question for the future is whether other methods for selecting the seeds would yield better reconstruction errors.

In the problem setting of this paper, the user provides the rank of the decomposition and the goal is to minimize the reconstruction error over the rank-$k$ \obmf decompositions. A common variant in the Boolean matrix factorization world is to make the rank a free variable and replace the target function with measure that penalizes for higher ranks (see, e.g.~\cite{miettinen14mdl4bmf,lucchese13unifying,hess17primping}). The \emph{Minimum Description Length} principle is a common approach. The ordered nature of our factor matrices could help with finding more efficient MDL decompositions, as the factor matrices are easier to compress using run-length encoding or similar approaches.


\bibliographystyle{abbrvnat}
\bibliography{library}  


\appendix
\section{Proofs} 
\label{sec:appendix:proofs}

\begin{proof}[Proof of Theorem~\ref{thm:obmf:np_hard}]
  In this case, we are looking for a decomposition of format $\mD \approx \vx^T \vy$, where $\mD\in\B^{n\times m}$, $\vx\in\B^n$, and $\vy\in\B^m$. Notice that (\textit{i}) whether we use normal or Boolean algebra does not matter in this case; and (\textit{ii}) we can always find the ordering after we have found the decomposition, as we only need to order the vectors $\vx$ and $\vy$. But this problem, the \emph{rank-1 binary matrix factorization} problem, is known to be \NP-hard~\citep{gillis15complexity:arxiv}, finalizing the proof.
\end{proof}

\begin{proof}[Proof of Theorem~\ref{thm:obmf:inapprox}]
The decision problem is obviously in \NP.

We prove the hardness by reduction from \hamilton, where
we are given a graph $G = (V, E)$ and asked whether there is a hamiltonian
path, that is, a path visiting every vertex exactly once.

Assume that we are given a graph $G = (V, E)$ with $n$ vertices and $m$ edges.
Assume that we have some arbitrary order on the vertices $V = v_1, \ldots,
v_n$, and on the edges $E = e_1, \ldots, e_m$.

Let us define $\mD$ first. The dataset will be of size $\by{(n + m + 1)}{(3m +
1)}$.  To define the matrix, we split the rows in two parts $R = r_1, \ldots,
r_n$ and $S = s_0, \ldots, s_m$, containing respectively $n$ and $m$ rows.
Similarly, we split the columns in 3 parts, $X = x_1, \ldots, x_m$, $Y = y_1, \ldots, y_m$,
$Z = y_0, \ldots, y_m$.

The 1s in $\mD$ are as follows. for each edge $e_\ell = (v_i, v_j)$, we set
the cells
$(r_i, x_\ell)$
$(r_j, x_\ell)$
$(r_i, y_\ell)$
$(r_j, y_\ell)$ to be 1.
For two adjacent edges $e_\ell$ and $e_{\ell + 1}$, we set the cells
$(s_\ell, y_\ell)$
$(s_\ell, z_\ell)$
$(s_\ell, x_{\ell + 1})$.
Finally, we set $(s_0, x_1)$, $(s_0, z_0)$, and
$(s_m, y_m)$, $(s_m, z_m)$ to be 1.
The remaining values are 0.

We argue that there is a zero-error solution for \obmf using $k = 3m - n + 2$ if and only
there is a hamiltonian path.

Let us prove the easy direction: assume that there is a hamiltonian path.
To that end, let us permute the rows and columns $\mD$ such that the factor matrices
do not have gap zeros.
Permute $\mD$ as follows: Set the column order as $z_0, x_1, y_1, z_1, x_2, y_2, \ldots$.
Order the rows in $R$ according to the hamiltonian path, followed by the rows in $S$.
We denote the resulting matrix by $\mD'$.
There is a zero-solution if the ones in $\mD'$ are a union of $k$ contiguous blocks.
The $k$ blocks are as follows: $m + 1$ blocks covering individual rows in $S$,
$n - 1$ blocks covering edges along the hamiltonian path (this can be done
since the corresponding rows in $R$ and the corresponsding columns in $X$ and $Y$ are adjacent), 
and $2(m - n + 1)$ blocks to cover the remaining edges, 2 blocks per edge. 
This covers all 1s using $m + 1 + n - 1 + 2(m - n + 1) = k$ blocks.

Let us prove the other direction.
Assume that there is zero-error solution, and let $\mD'$ be the permuted version of $\mD$
with no gap zeros.
Then the ones in $\mD'$ must be a union of $k$ contiguous blocks. 
For a column index $i$, we define $f_i$ to be the number of blocks started at
the $i$th column.
Let us also define $g_i$ to be the number of blocks ended at $i$th columns.
Trivially, $\sum_i f_i + g_i = 2k$.

We say that an edge $(v_i, v_j) \in E$ is \emph{active} if $i$ and $j$ are adjacent
in $\mD'$. Let $h$ be the total number of active edges. Note that we have
$h \leq n - 1$.  Assume for a moment that $h = n - 1$
and let $w_1, \ldots w_n$ be the vertices ordered according to the order of $R$
in $\mD'$. Since $h = n - 1$, we are forced to have $(w_i, w_{i + 1}) \in E$. 
This implies that $w_1, \ldots, w_n$ is a hamiltonian path.

We will now argue that $h \geq n - 1$.


Consider two adjacent columns at $i$ and $i + 1$. If none of the columns
are in $Z$, then both columns contain 1 that is not in the other column.
This forces $g_i + f_{i + 1} \geq 2$. The same argument holds if both
columns are in $Z$.

Assume that the $j$th column is in $X$ and $(j + 1)$th column is in $Z$.
Assume that $g_i + f_{i + 1} = 1$. Let $a$ and $b$ be the rows in $R$
that are active in the $j$th columns.
Since $Z$ does not have active rows, the block(s) covering $a$ and $b$
must terminate, and since $g_i \geq 1$, we have only block, implying
that $a$ and $b$ are adjacent. The same result holds if we replace $X$
with $Y$ or permute the order of the two columns. 
To summarize, if $g_i + f_{i + 1} = 1$, then either $i$th or the $(i + 1)$th
column corresponds to an active edge.

In addition, we must have $f_1 \geq 1$ and $g_{3m + 1} \geq 1$ as these columns have 1s.
This leads to
\[
\begin{split}
	2(3m - n + 2) & = 2k = \sum_{i = 1}^{3m + 1} f_i + g_i \\
	& = f_1 + g_{3m + 1} + \sum_{i = 1}^{3m} f_{i + 1} + g_i \\
	& \geq 2(3m + 1) - 2h,
\end{split}
\]
proving the result.
\end{proof}

\begin{proof}[Proof of Lemma~\ref{lem:borderq}]
Let $S$ be the optimal border-compatible set. Then there is $i$
such that $S$ is a union of the best border-compable set of $c_i$
and either the union of all leaves in $c_1, \ldots, c_{i - 1}$
or $c_{i + 1}, \ldots, c_\ell$.
\end{proof}

\begin{proof}[Proof of Lemma~\ref{lem:innerq}]
Let $S$ be the optimal compatible set. Then $S$ is either included
completely within one child, or
there are indices $i < j$
such that $S$ is a union of the best border-compable sets of $c_i$, $c_j$,
and the union of all leaves in $c_{i + 1}, \ldots, c_{j - 1}$.
\end{proof}

\begin{proof}[Proof of Lemma~\ref{lem:borderp}]
Let $S$ be the optimal border-compatible set. Then there is $i$ such that $S$
is a union of the best border-compable set of $c_i$ and the union of all leaves
of some children.

Let $w$ be a child of $v$, if $\total{w} \geq 0$, then having the leaves of
$w$ in $S$ has positive gain. Let $P$ be these children.  The total gain
corresponds of having these children is $\sum_i \max(\total{v}, 0)$.

We need to transform one of the children to a partial. Let $w$ be a child of
$v$.  If $\total{w} < 0$, then $v \notin P$ and adding $w$ will have a gain of
$\border{w}$. If $\total{w} \geq 0$, then $v \in P$, and transforming $w$
from a fully-covered node to a partial node will have a gain of $\border{w} - \total{w}$.
In summary, the gain is equal to $g(w)$. Thus, selecting the vertex with
the maximal $g(w)$ should be the partial child in $S$.
\end{proof}

\begin{proof}[Proof of Lemma~\ref{lem:innerp}]
Let $S$ be the optimal compatible set.  Then $S$ is either included completely
within one child, or
$S$ is a union of some children and possibly up to two
of the best border-compable sets for some $c_i$ and $c_j$.

Let $w$ be a child of $v$, if $\total{w} \geq 0$, then having the leaves of
$w$ in $S$ has positive gain. Let $P$ be these children.  The total gain
corresponds of having these children is $\sum_i \max(\total{v}, 0)$.

As shown in the proof of Lemma~\ref{lem:borderp}, $b_1$ and $b_2$ correspond
the top-2 border-compatible sets. It may happen that $b_1$ or $b_2$ are negative,
in which case we simply do not add them to $S$. Thus the total gain of
border-compatible sets is $\max(b_1, 0) + \max(b_2, 0)$.
\end{proof}

\section{Further Visualizations}
\label{sec:appendix:visualizations}

Here we present for the \newssym and \locations data sets.

\paragraph{The \newssym data}
\label{sec:exp:visual:terms}

The visualization of the \newssym data, in Figure~\ref{fig:4news}, is markedly different from Figure~\ref{fig:lesmis}. Here, most bundles overlap each other. This indicates that many of these terms are used together in different posts. Yet, we can also identify specialized groups of terms. At the left of Figure~\ref{fig:4news}, we have a blue bundle, from \emph{mission} to \emph{nasa}, that contains terms used when discussing space programs. This overlaps with a larger orange bundle, from \emph{chip} to \emph{tap}, containing terms related to cryptography. 

\begin{figure}
  \centering
  \begin{tikzpicture}
    \input{4news_graph}
  \end{tikzpicture}
  \caption{Visualization of the \newssym data with the ribbons and ordering from \aobmf.}
  \label{fig:4news}
\end{figure}

\paragraph{The \locations data}
\label{sec:exp:visual:locations}

For the \locations data, in Figure~\ref{fig:locations}, we cannot print any labels, as the data consists of \num{3203} geographical locations. For these results, we did a rank-$10$ decomposition. Most of the edge bundles again form segments along the edge of the circle, corresponding to locations with similar fauna. Few larger edge bundles cover most of these locations, as well, corresponding to more general biospheres. In this figure, many nodes have no edges drawn. This indicates that they were not part of any significant quasi-clique. 

\begin{figure}
  \centering
  \begin{tikzpicture}
    \fill[color=yafcolor1, opacity=0.2] (139.369:2.5) arc (139.369:186.912:2.5) to[in=319.369, out=366.912] (139.369:2.5) arc (139.369:186.912:2.5) to[in=319.369, out=366.912] (139.369:2.5);
\draw[color=yafcolor1] (139.369:2.5) arc (139.369:186.912:2.5) to[in=319.369, out=366.912] (139.369:2.5) arc (139.369:186.912:2.5) to[in=319.369, out=366.912] (139.369:2.5);
\fill[color=yafcolor2, opacity=0.2] (226.138:2.5) arc (226.138:273.344:2.5) to[in=406.138, out=453.344] (226.138:2.5) arc (226.138:273.344:2.5) to[in=406.138, out=453.344] (226.138:2.5);
\draw[color=yafcolor2] (226.138:2.5) arc (226.138:273.344:2.5) to[in=406.138, out=453.344] (226.138:2.5) arc (226.138:273.344:2.5) to[in=406.138, out=453.344] (226.138:2.5);
\fill[color=yafcolor3, opacity=0.2] (186.575:2.5) arc (186.575:226.138:2.5) to[in=366.575, out=406.138] (186.575:2.5) arc (186.575:226.138:2.5) to[in=366.575, out=406.138] (186.575:2.5);
\draw[color=yafcolor3] (186.575:2.5) arc (186.575:226.138:2.5) to[in=366.575, out=406.138] (186.575:2.5) arc (186.575:226.138:2.5) to[in=366.575, out=406.138] (186.575:2.5);
\fill[color=yafcolor4, opacity=0.2] (263.116:2.5) arc (263.116:292.451:2.5) to[in=443.116, out=472.451] (263.116:2.5) arc (263.116:292.451:2.5) to[in=443.116, out=472.451] (263.116:2.5);
\draw[color=yafcolor4] (263.116:2.5) arc (263.116:292.451:2.5) to[in=443.116, out=472.451] (263.116:2.5) arc (263.116:292.451:2.5) to[in=443.116, out=472.451] (263.116:2.5);
\fill[color=yafcolor5, opacity=0.2] (168.142:2.5) arc (168.142:302.904:2.5) to[in=348.142, out=482.904] (168.142:2.5) arc (168.142:302.904:2.5) to[in=348.142, out=482.904] (168.142:2.5);
\draw[color=yafcolor5] (168.142:2.5) arc (168.142:302.904:2.5) to[in=348.142, out=482.904] (168.142:2.5) arc (168.142:302.904:2.5) to[in=348.142, out=482.904] (168.142:2.5);
\fill[color=yafcolor6, opacity=0.2] (303.016:2.5) arc (303.016:321.224:2.5) to[in=483.016, out=501.224] (303.016:2.5) arc (303.016:321.224:2.5) to[in=483.016, out=501.224] (303.016:2.5);
\draw[color=yafcolor6] (303.016:2.5) arc (303.016:321.224:2.5) to[in=483.016, out=501.224] (303.016:2.5) arc (303.016:321.224:2.5) to[in=483.016, out=501.224] (303.016:2.5);
\fill[color=yafcolor7, opacity=0.2] (167.918:2.5) arc (167.918:331.452:2.5) to[in=366.687, out=511.452] (186.687:2.5) arc (186.687:199.725:2.5) to[in=347.918, out=379.725] (167.918:2.5);
\draw[color=yafcolor7] (167.918:2.5) arc (167.918:331.452:2.5) to[in=366.687, out=511.452] (186.687:2.5) arc (186.687:199.725:2.5) to[in=347.918, out=379.725] (167.918:2.5);
\fill[color=yafcolor8, opacity=0.2] (186.687:2.5) arc (186.687:199.725:2.5) to[in=347.918, out=379.725] (167.918:2.5) arc (167.918:331.452:2.5) to[in=366.687, out=511.452] (186.687:2.5);
\draw[color=yafcolor8] (186.687:2.5) arc (186.687:199.725:2.5) to[in=347.918, out=379.725] (167.918:2.5) arc (167.918:331.452:2.5) to[in=366.687, out=511.452] (186.687:2.5);
\fill[color=yafcolor1, opacity=0.2] (331.452:2.5) arc (331.452:346.513:2.5) to[in=511.564, out=526.513] (331.564:2.5) arc (331.564:346.513:2.5) to[in=511.452, out=526.513] (331.452:2.5);
\draw[color=yafcolor1] (331.452:2.5) arc (331.452:346.513:2.5) to[in=511.564, out=526.513] (331.564:2.5) arc (331.564:346.513:2.5) to[in=511.452, out=526.513] (331.452:2.5);
\fill[color=yafcolor2, opacity=0.2] (346.625:2.5) arc (346.625:359.888:2.5) to[in=407.15, out=539.888] (227.15:2.5) arc (227.15:273.231:2.5) to[in=526.625, out=453.231] (346.625:2.5);
\draw[color=yafcolor2] (346.625:2.5) arc (346.625:359.888:2.5) to[in=407.15, out=539.888] (227.15:2.5) arc (227.15:273.231:2.5) to[in=526.625, out=453.231] (346.625:2.5);
  \end{tikzpicture}
  \caption{Visualization of the \locations data with the ribbons and ordering from \aobmf.}
  \label{fig:locations}
\end{figure}


\end{document}